\newtheorem{theorem}{Theorem}[section]
\newtheorem{lemma}[theorem]{Lemma}
\theoremstyle{definition}
\newenvironment{proofof}[1]{\begin{proof}[Proof of {#1}]}{\end{proof}}
\newcommand{\Ex}{\mathbb{E}}
\newcommand{\OPT}{\operatorname{OPT}}
\newcommand{\argmax}{\operatorname{argmax}}
\newcommand{\LPmax}[1][\kappa,S]{\ensuremath{\bigl(\operatorname{LP}{(#1)}\bigr)}}
\newcommand{\Vmax}{\operatorname{OPT_{LP}}}
\newcommand{\sm}{\ensuremath{\setminus}}
\newcommand{\sse}{\ensuremath{\subseteq}}
\newcommand{\R}{\ensuremath{\mathbb R}}
\newcommand{\Mc}{\ensuremath{\mathcal M}}
\newcommand{\Sc}{\ensuremath{\mathcal S}}
\newcommand{\e}{\ensuremath{\epsilon}}
\newcommand{\gm}{\ensuremath{\gamma}}
\newcommand{\Gm}{\ensuremath{\Gamma}}
\newcommand{\kp}{\ensuremath{\kappa}}
\newcommand{\al}{\ensuremath{\alpha}}
\newcommand{\Dt}{\ensuremath{\Delta}}
\newcommand{\dlp}[1][\kappa,S]{\ensuremath{\bigl(\operatorname{DLP}({#1})\bigr)}}
\newcommand{\xos}{\ensuremath{\mathsf{xos}}}
\newcommand{\demd}{\ensuremath{\mathsf{demd}}}
\newcommand{\tc}{\ensuremath{\tilde c}}
\newcommand{\bA}{\ensuremath{\overline A}}
\newcommand{\bR}{\ensuremath{\overline R}}
\title{An $O(\log \log n)$-approximate budget feasible mechanism for subadditive
  valuations\thanks{An extended abstract is to appear in the Proceedings of the 26th ACM
    Conference on Economics and Computation (EC) 2025.}}
\author{
    Rian Neogi\thanks{\texttt{\{rneogi,kpashkovich,cswamy\}@uwaterloo.ca}. 
    Dept. of Combinatorics and Optimization, Univ. Waterloo, Waterloo, ON N2L 3G1. 
    Supported in part by the NSERC Discovery grants of K. Pashkovich and C. Swamy.}
\and
\addtocounter{footnote}{-1} 
Kanstantsin Pashkovich\footnotemark 
\and 
\addtocounter{footnote}{-1} 
Chaitanya Swamy\footnotemark
}
\date{}
\begin{document}


\maketitle

\begin{abstract}
In \emph{budget-feasible mechanism design}, there is a set of items $U$, each owned by a distinct seller. The seller of item $e$ incurs a private cost $\overline{c}_e$ for supplying her item. A buyer wishes to procure a set of items from the sellers of maximum value, where the value of a set $S\subseteq U$ of items is given by a valuation function $v:2^U\to \mathbb{R}_+$. The buyer has a budget of $B \in \mathbb{R}_+$ for the total payments made to the sellers. We wish to design a mechanism that is \emph{truthful}, that is, sellers are incentivized to report their true costs,  \emph{budget-feasible}, that is, the sum of the payments made to the sellers is at most the budget $B$, and that outputs a set whose value is large compared to $\text{OPT}:=\max\{v(S):\overline{c}(S)\le B,S\subseteq U\}$.

Budget-feasible mechanism design has been extensively studied, with the
literature focusing on (classes of) \emph{subadditive} (or complement-free) valuation
functions, and various polytime, budget-feasible mechanisms, achieving constant-factor
approximation to $\text{OPT}$, have been devised for the special cases of additive,
submodular, and XOS (or fractionally subadditive) valuations. However, for general
subadditive valuations, the best-known approximation factor achievable by a polytime
budget-feasible mechanism (given access to demand oracles) was only $O(\log n / \log \log
n)$, where $n=|U|$ is the number of items.  

We improve this state-of-the-art significantly by designing a randomized budget-feasible
mechanism for subadditive valuations that \emph{achieves a substantially-improved
  approximation factor of $O(\log\log n)$ and runs in polynomial time, given access to
  demand oracles.} 

Our chief technical contribution is to show that, given any set $S\subseteq U$, one can
construct in polynomial time a distribution $\mathcal{D}$ over posted-payment vectors $d
\in\mathbb{R}_+^S$ satisfying $d(S)\le B$ (where $d(S)=\sum_{e \in S}d_e$) such that,
$\mathbb{E}_{d \sim \mathcal{D}}[v(\{e \in S: c_e\le d_e\} )] \ge \frac{1}{O(\log \log
  n)}\cdot v(S)$ for every cost-vector $c \in \mathbb{R}^S_+$ satisfying $c(S)\le B/O(\log
\log n)$. Using this distribution, we show how to construct a budget feasible mechanism
for subadditive valuations that achieves an approximation factor of $O(\log \log n)$. 
\end{abstract}

\section{Introduction}

In \emph{budget-feasible mechanism design}, there is a ground set $U$ of $n$ elements or
items, and a valuation function $v : 2^U\to \mathbb{R}_+$, where $v(S)$ specifies the
value of a subset $S \subseteq U$. We are also given a budget $B \in \mathbb{R}_+$. 
The mechanism designer (or buyer) seeks to procure a set of elements of maximum value.
Each element $e$ is held by a strategic seller who incurs a private cost
$\overline{c}_e\ge 0$ for supplying item $e$; we will often identify sellers with the
items they hold. 
The central difficulty here is that the private costs $\{\overline{c}_e\}$ are known only to
the sellers, and not to the mechanism designer. 
Each seller $e$ reports a cost $c_e$ to the mechanism, which may not be equal to her true
cost.
In order to incentivize sellers to reveal their true costs, the mechanism designer makes payments to the sellers.
The utility of a seller is equal to the payment it receives minus the cost incurred by potentially supplying the item.
Our goal is to design a \emph{truthful} mechanism, where each seller maximizes their utility by reporting her true cost. Additionally, the mechanism designer also has a budget $B$ and we need to ensure that the sum of the payments given out to the sellers do not exceed this budget. 

We want to design a mechanism that outputs a set that is a good approximation to the best set one can get subject to the budget constraint. 
That is to say, we compare the value of our solution to the benchmark $\OPT:=\max\{v(S):c(S)\le B\}$. 
Note that for a truthful mechanism, it does not matter whether we consider the true costs or the reported costs for the budget constraint.
For general valuation functions as input, the work of \cite{Singer10} shows that no non-trivial approximation is possible for any budget feasible mechanism.
Hence work has been focussed on special classes of valuation functions, such as,
typically (in increasing order of generality) additive, submodular, XOS, or subadditive valuations.

For subadditive valuations, ~\cite{Bei2017WorstCaseMD} made an important
contribution showing that the problem admits a randomized 
$O(\log n / \log \log n)$-approximate budget feasible mechanism. (Recall that $n=|U|$).
They also provide an $O(1)$-factor approximate mechanism in the Bayesian setting, where costs are sampled from a prior distribution that is known to the mechanism designer.
They use Yao's minimax principle to show that their $O(1)$-approximate Bayesian mechanism
implies the \emph{existence} of an $O(1)$-factor approximate mechanism in the prior-free
setting. 
This work was later improved in two ways: \cite{BalkanskiGGST22} achieved a
\emph{deterministic} $O(\log n / \log \log n)$ approximation, 
and~\cite{NeogiPS24} provide a \emph{constructive} $O(1)$-factor approximate mechanism in
the prior-free setting, that runs in exponential time.

\subsection{Our contributions}
{\em We design a budget-feasible mechanism for subadditive valuations that achieves an
$O(\log \log n)$-approximation, 
and runs in polynomial time given access to demand oracles}.
This is a substantial improvement over the previous-best approximation factor of 
$O(\log n / \log \log n)$ achievable in polytime (also given demand
oracle)~\cite{Bei2017WorstCaseMD,BalkanskiGGST22}.  

\begin{theorem}\label{thm:main_result} \label{mainthm}
There is a truthful budget feasible mechanism for subadditive valuations that achieves an
approximation factor of $O(\log \log n)$, and runs in polynomial time given a demand oracle
for the valuation $v$.
\end{theorem}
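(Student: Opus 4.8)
The plan is to combine the key distribution lemma stated above with the random-sampling paradigm for budget-feasible mechanisms (as used by~\cite{Bei2017WorstCaseMD} and in~\cite{NeogiPS24}) together with a trivial routine for the case where the value is concentrated on a single item. Concretely, the mechanism flips a fair coin: with probability $1/2$ it runs a simple truthful, budget-feasible routine that procures one affordable item of near-maximum value (paying it $B$), and with probability $1/2$ it runs the random-sampling branch described next. A coin-flip combination of two (universally) truthful, budget-feasible mechanisms is again truthful and budget-feasible and loses at most a factor $2$ in value, so it suffices to argue that the better of the two branches recovers value $\Omega\bigl(\OPT/\log\log n\bigr)$.

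For the random-sampling branch I would split $U$ into $U_1 \sqcup U_2$ by independent fair coins, use the demand oracle together with the reported costs restricted to $U_1$ to (approximately) solve the offline budget-constrained subadditive maximization problem on $U_1$, and then use what is learned there to select, and apply the key lemma to, a target set for which we post payments $d$ on $U_2$; every seller in $U_2$ then faces a payment that depends only on the reports of $U_1$-sellers, and $U_1$-sellers never appear in the output, so this branch is truthful, while budget-feasibility is immediate from $d(\cdot)\le B$. The relevant structural facts are: (i) for subadditive $v$, a uniformly random half $T'$ of any set $T$ satisfies $\Ex[v(T')]\ge v(T)/2$, so the optimum confined to $U_2$ still has value $\Omega(\OPT)$ with constant probability; and (ii) after removing from an optimal set $S^*$ its few items of individual cost exceeding $B/\Theta(\log\log n)$ — whose total value, absent a dominant single item, is a small fraction of $\OPT$ by subadditivity — the surviving cheap part can be organized to meet the cost budget $B/O(\log\log n)$ demanded by the key lemma while still carrying value $\Omega(\OPT)$. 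Feeding such a target set to the key lemma and posting the sampled payment vector then yields, in expectation, a surviving set of value $\Omega\bigl(\OPT/\log\log n\bigr)$.

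The hard part will be twofold. First, one must set up the sampling step so that the set handed to the key lemma is simultaneously (a) determined without reference to the costs of its own members (needed for truthfulness of the posted-payment step), (b) of value $\Omega(\OPT)$ with constant probability, and (c) within the cost budget $B/O(\log\log n)$ that the lemma requires; reconciling these — given that the ``learning'' happens on $U_1$ while payments are posted on $U_2$ — and ensuring that the full chain of reductions (single-item case, pruning of expensive items, random halving, and the lemma itself) composes to an overall loss of $O(\log\log n)$ rather than a larger power of it, is the crux. Second, the learning step needs an $O(1)$-approximate solution to offline budget-constrained subadditive maximization using only a demand oracle, which I would obtain by solving the configuration-LP relaxation via demand queries and passing a suitable (possibly fractional) near-optimal solution to the rounding embodied in the key lemma. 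Once the target set is in hand, truthfulness and budget-feasibility follow routinely from the posted-payment structure and the independence afforded by the random split, and the approximation guarantee is exactly what the key lemma delivers.
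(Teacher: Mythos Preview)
Your proposal has a genuine gap at step~(ii). You claim that after removing from an optimal set the items of individual cost exceeding $B/\Theta(\log\log n)$, the surviving ``cheap part'' can be ``organized to meet the cost budget $B/O(\log\log n)$'' required by the key lemma. But removing individually expensive items only bounds per-item cost, not total cost: there are at most $O(\log\log n)$ such items (so the value argument is fine), yet the surviving set can still have total cost close to~$B$, not $B/O(\log\log n)$. Theorem~\ref{thm:good_dist} needs the \emph{total} cost $c(S^*)$ to be small. Partitioning the cheap part into $O(\log\log n)$ buckets of small total cost and applying the lemma to one of them loses an additional $O(\log\log n)$ factor by subadditivity, yielding only $\OPT/O((\log\log n)^2)$.

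More fundamentally, your truthfulness story---payments on $U_2$ depend only on $U_1$-reports---forces the target set to be chosen without looking at $U_2$-costs, and that is precisely what prevents you from controlling $c(S^*)$. The paper breaks this tension by letting $S^*$ depend on the $U_2$ costs: it sets $S^*=\argmax_{S\subseteq U_2}\bigl\{v(S)-\tfrac{V_1}{2B}\,c(S)\bigr\}$ via a demand query with prices proportional to the reported costs, and then proves monotonicity directly (a winner who lowers her cost stays in the demand set under consistent tie-breaking). Even so, $c(S^*)\le B/O(\log\log n)$ is not guaranteed, and the paper needs a second, entirely separate ingredient that your outline lacks: when $c(S^*)>B/O(\log\log n)$, it solves the LP $\max\bigl\{q(S^*):q(S)\le v(S^*)-v(S^*\setminus S)\ \forall S\subseteq S^*,\ q\ge 0\bigr\}$ with a demand oracle, posts thresholds $d'_e\propto q_e$, and prunes $\{e:c_e\le d'_e\}$ to a prefix of bounded $q$-weight. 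This branch---and the accompanying budget-feasibility and monotonicity analysis for it---is where the work specific to general subadditive valuations (beyond what the key lemma alone gives) actually happens.
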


Our mechanism is randomized: it satisfies truthfulness and the budget constraint with
probability 1, and obtains expected value at least $\OPT/O(\log\log n)$.
To state our key technical contribution, 
it will be useful to discuss some snippets from prior work on budget-feasible
mechanism design for subadditive valuations.
An important insight to emerge from the work of~\cite{Bei2017WorstCaseMD,NeogiPS24} is the
following structural property of subadditive functions, stated explicitly in~\cite{NeogiPS24}.
(This is quite implicit in~\cite{Bei2017WorstCaseMD} hiding behind their use of Yao's
principle.)  

\begin{theorem}[\cite{NeogiPS24}] \label{thm:existence_of_dist}
Let $v : 2^U \to \mathbb{R}_+$ be a subadditive function and $S^*\sse U$. Let 
$K \subseteq \mathbb{R}^{S^*}_+$ be a finite set. 
There exists a distribution $\mathcal{D}^*$ over vectors in $K$ such that, for any 
$c\in K$, we have $\Ex_{d \sim \mathcal{D}^*}[v(\{e \in S^*:c_e\le d_e\} )] \ge \frac{1}{2}v(S^*)$. 
\end{theorem}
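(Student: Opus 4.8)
The plan is to recast the statement as a two‑player zero‑sum game and apply von Neumann's minimax theorem; this is essentially the same maneuver that underlies the use of Yao's principle in~\cite{Bei2017WorstCaseMD}. Let a ``payment player'' pick a vector $d\in K$, let an adversarial ``cost player'' pick $c\in K$, and let the payoff to the payment player be $v\bigl(\{e\in S^*:c_e\le d_e\}\bigr)$. Since $K$ is finite, both players have finite pure‑strategy sets, the payoff is bilinear in the mixed strategies, and the minimax theorem applies, giving
\[
  \max_{\mathcal{D}^*}\ \min_{c\in K}\ \Ex_{d\sim\mathcal{D}^*}\!\bigl[v(\{e\in S^*:c_e\le d_e\})\bigr]
  \;=\;
  \min_{\mu}\ \max_{d\in K}\ \Ex_{c\sim\mu}\!\bigl[v(\{e\in S^*:c_e\le d_e\})\bigr],
\]
where $\mathcal{D}^*,\mu$ range over distributions supported on $K$. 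It therefore suffices to lower‑bound the right‑hand (min–max) side by $\tfrac12 v(S^*)$, i.e.\ to show that for \emph{every} distribution $\mu$ over $K$ there is some $d\in K$ with $\Ex_{c\sim\mu}[v(\{e\in S^*:c_e\le d_e\})]\ge\tfrac12 v(S^*)$.

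For this key step, fix $\mu$ and have the payment player simply imitate the cost player: draw $d\sim\mu$ independently of $c\sim\mu$. Given a realized pair $(c,d)$, set $T(c,d)=\{e\in S^*:c_e\le d_e\}$ and $T'(c,d)=\{e\in S^*:d_e\le c_e\}$. Every $e\in S^*$ satisfies $c_e\le d_e$ or $d_e\le c_e$, so $T(c,d)\cup T'(c,d)=S^*$, and subadditivity of $v$ gives $v(T(c,d))+v(T'(c,d))\ge v(S^*)$. Taking expectations over the independent pair and observing that $(c,d)$ and $(d,c)$ are identically distributed, so that $\Ex[v(T(c,d))]=\Ex[v(T'(c,d))]$, we get $2\,\Ex_{c,d\sim\mu}[v(T(c,d))]\ge v(S^*)$. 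Hence $\Ex_{d\sim\mu}\!\bigl[\Ex_{c\sim\mu}[v(\{e\in S^*:c_e\le d_e\})]\bigr]\ge\tfrac12 v(S^*)$, so some $d\in\supp(\mu)\sse K$ achieves $\Ex_{c\sim\mu}[v(\{e\in S^*:c_e\le d_e\})]\ge\tfrac12 v(S^*)$, as required. Feeding this back into the minimax identity shows the max–min value is at least $\tfrac12 v(S^*)$; since $K$ is finite the outer maximum is attained, and any optimal mixed strategy of the payment player serves as the desired distribution $\mathcal{D}^*$.

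This is a short minimax computation with no genuine hard core; the only place to be mildly careful is the handling of ties. Using the non‑strict inequality $c_e\le d_e$ in \emph{both} $T$ and $T'$ is exactly what makes $T\cup T'=S^*$ hold unconditionally and keeps the symmetry $\Ex[v(T)]=\Ex[v(T')]$ exact — an element with $c_e=d_e$ is just counted on both sides, which only helps. One should also note explicitly that finiteness of $K$ is what licenses both the minimax theorem and the attainment of the outer maximum, so no topological hypotheses are needed; and one only uses $v\ge 0$ together with subadditivity (no monotonicity). I expect all of this to be entirely routine.
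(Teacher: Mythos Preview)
Your proposal is correct and is essentially the same argument as the paper's, just phrased in game-theoretic rather than LP language: the paper writes down the primal LP for the max--min value, takes the dual, and shows any dual feasible $(y,\rho)$ has $\rho\ge y^\intercal M y=\tfrac12 y^\intercal(M+M^\intercal)y\ge\tfrac12$, which is exactly your ``imitate $\mu$'' step (with $y$ playing the role of $\mu$ and the entrywise bound $M+M^\intercal\ge J$ being your $v(T)+v(T')\ge v(S^*)$). The only cosmetic difference is that you invoke von Neumann's minimax theorem where the paper invokes LP duality.
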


For completeness, we include the proof of the above theorem in the appendix. 
\cite{NeogiPS24} show that this leads to an $O(1)$-approximate budget-feasible mechanism 
by computing a suitable set $S^*$ of large value, and 
taking $K$ to a discretized version of the simplex 
$\Dt^{S^*}(B):= \{x \in \mathbb{R}^{S^*}_+: \sum_{e\in S^*}x_e\le B\}$: 
a vector $d$ sampled from the distribution $\mathcal{D}^*$ yields 
take-it-or-leave-it posted payments for the sellers (that sum to at most $B$), 
so that if we select seller $e$ only if $c_e\le d_e$ we obtain expected value at least 
$\frac{1}{2}v(S^*)$. 
However, computing the distribution in \cref{thm:existence_of_dist} entails solving an
exponentially large LP, and so, this only yields an {\em exponential-time mechanism}. 

Our {\em chief technical contribution} is to prove the following relaxed version of
\cref{thm:existence_of_dist}, which shows that, for the simplex $\Dt^{S^*}(B)$, a
distribution satisfying a certain weaker guarantee {\em can be computed in polytime}. For 
a vector $\al\in \mathbb{R}^U$ an $S\sse U$, let $\al(S)$ denote $\sum_{e\in S}\al_e$.

\begin{theorem}[Informal version of Theorem~\ref{thm:good_dist}]\label{thm:good_dist_informal}
Let $v:2^U \to \mathbb{R}_+^U$ be a subadditive function, $S^*\sse U$, and  $B \in \mathbb{R}_+$.
Using demand oracles, one can compute in polynomial time a distribution $\mathcal{D}$ over
vectors in $\Dt^{S^*}(B)$,
such that for every $c\in \mathbb{R}^{S^*}_+$ satisfying 
$c(S^*) \le \frac{B}{O(\log \log n)}$, we have the guarantee
$\Ex_{d \sim \mathcal{D}}[v(\{e \in S^*:c_e\le d_e\} )]\ge \frac{v(S^*)}{O(\log \log n)}$.%
\footnote{The set $K$ in Theorem~\ref{thm:existence_of_dist} is finite, but here the
guarantee holds for all $c$ satisfying the stated condition.}
\end{theorem}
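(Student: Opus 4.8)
The plan is to recast the guarantee as the value of a two-player zero-sum game and to solve a structured, polynomially-tractable relaxation of it using demand queries. Picture a \emph{payment player} who commits to a distribution $\mathcal D$ over $\Dt^{S^*}(B)$ and a \emph{cost adversary} who, after seeing $\mathcal D$, picks $c\in\mathbb R^{S^*}_+$ with $c(S^*)\le B/\rho$ (here $\rho=\Theta(\log\log n)$) so as to minimize $\Ex_{d\sim\mathcal D}[v(\{e\in S^*:c_e\le d_e\})]$. \cref{thm:existence_of_dist} says that if the payment player may use \emph{all} of $\Dt^{S^*}(B)$, the value is $\ge v(S^*)/2$; the catch is that the LP witnessing this has one column per payment vector of a fine discretization of the simplex, i.e.\ exponentially many. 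So the first step is to replace the payment player's strategy set by a polynomially-describable family $\mathcal F$ of payment vectors for which (i) the induced game can be solved in polynomial time and (ii) the value drops by only an $O(\log\log n)$ factor. Two subsidiary reductions make the rest clean: extending any guarantee from a finite $\e$-net of cost vectors to all admissible $c$ --- just round each $c_e$ up to the nearest grid point, which only shrinks $\{e:c_e\le d_e\}$ and hence only decreases its value --- and, symmetrically, discretizing the payments to a fine grid at negligible loss.

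The backbone of the family $\mathcal F$ is a family of ``scales.'' A single scale is obtained by using a demand query (which lets us approximately maximize $v$ over subsets subject to a cost bound) to produce a candidate set $A\sse S^*$ of large value, and then posting the uniform payment $B/|A|$ on $A$ and $0$ elsewhere; this already defeats any adversary whose cost mass is spread over roughly $|A|$ value-carrying elements, with only an $O(1)$ loss. Mixing uniformly over the $\Theta(\log n)$ dyadic choices of $|A|$ therefore yields a polynomial-time-computable distribution that meets the guarantee with $\rho=O(\log n)$ --- essentially recovering the prior bound. First I would write this version out carefully, since it already exercises all the machinery (demand-query candidate generation, a polynomial LP over $\mathcal F$, the net argument).

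To push $O(\log n)$ down to $O(\log\log n)$, I would replace the dyadic scale ladder by a far coarser, \emph{doubly-exponential} ladder of only $O(\log\log n)$ scales (sizes $2, 2^2, 2^4, 2^8,\dots$ up to $n$), together with, at each scale, not a single uniform payment but a sub-distribution obtained by recursively applying the construction --- grouping the candidate set, using the induced subadditive valuation on the groups, and splitting the budget among opened groups. The point I would aim to prove is that subadditivity of $v$ lets a coarse scale ``absorb'' the whole range of spreads sitting below the next coarse scale: when the adversary's effective spread is $m$ with $2^{2^i}<m\le 2^{2^{i+1}}$, invoking the scale of size $2^{2^{i+1}}$ --- with the budget split and the within-scale recursion arranged so that enough value-carrying elements of each opened group clear their posted payments against every admissible $c$ --- still captures an $\Om(1)$ fraction of $v(S^*)$, using the construction behind \cref{thm:existence_of_dist} on the $O(1)$-size base instances where the discretized simplex is polynomial. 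Since there are $O(\log\log n)$ scales and the scale-mixing loses a factor equal to their number, the total loss is $O(\log\log n)$.

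The hard part will be exactly this last step: showing that an $O(\log\log n)$-size scale family, rather than an $O(\log n)$-size one, can handle \emph{every} admissible cost vector. Concretely, I expect the obstacle to be (a) choosing the within-scale grouping --- a fixed balanced partition is provably too weak (it already fails for additive $v$), so the grouping must be randomized and coupled with the choice of which groups to open and how the residual budget is distributed; (b) a composition/subadditivity lemma certifying that the group-level distribution together with the within-group distributions loses only an $O(1)$ factor while keeping the total posted payment $\le B$; and (c) verifying that the base-case LP and the per-scale demand-query subroutines interact correctly with the ``for all $c$'' quantifier. Everything else --- the reduction to the game, solving the polynomial LP over $\mathcal F$, the net/rounding arguments, and assembling the final mechanism from the distribution via \cref{thm:existence_of_dist}-style posted payments --- I expect to be routine once (a)--(c) are in hand.
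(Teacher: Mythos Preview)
Your proposal is a research plan, not a proof: the step you yourself flag as ``the hard part'' --- showing that an $O(\log\log n)$-size doubly-exponential ladder of scales, each equipped with some recursive within-scale grouping, handles \emph{every} admissible cost vector --- is left entirely open. You list obstacles (a)--(c) and say you ``expect'' them to be the issues, but you resolve none of them; without that there is nothing to evaluate. And the underlying uniform-payment idea is genuinely fragile: posting $B/|A|$ on a large-value demand set $A$ fails when $v(A)$ is carried by a tiny subset (subadditivity does not rule this out), so the adversary can concentrate its $B/\rho$ budget there and wipe out most of the value. Your recursive grouping is supposed to fix exactly this, but you give no construction and no analysis.

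The paper's route sidesteps all of (a)--(c). It does not post uniform payments and its ``scales'' are not set sizes. It works with the LP \ref{distrlp}, which maximizes $\Ex[v(T)]$ over distributions on subsets of $S^*$ with marginals $\le\kappa$; the optimal dual $(p,\mu)$ of \ref{distrdual} yields a \emph{non-uniform} threshold vector $d_e\propto p_e$ on any fixed $S$, with the deterministic bound $v(\{e\in S:c_e\le d_e\})\ge v(S)-\Vmax(\kappa,S)$ whenever $c(S)\le\kappa B$ (\cref{lem:thresholds_simple}). The distribution $\mathcal D$ is then: sample $S$ from the optimal primal of $\LPmax[\kappa,S^*]$, and use this dual-based $d$ on $S$. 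A short composition argument gives expected payoff $\ge\Vmax(\kappa,S^*)-\Vmax(\kappa^2,S^*)$, and the doubly-exponential ladder enters only as a telescoping choice of $\kappa\in\{2^{-2^i}:1\le i\le\log\log n\}$ (\cref{lem:Dutting}), guaranteeing this difference is at least $v(S^*)/O(\log\log n)$ for some $i$; Markov's inequality then controls $c(S)$. No recursive grouping, no base-case LP over a discretized simplex, no composition lemma --- just one primal--dual pair solved at the right $\kappa$.
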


To utilize \cref{thm:good_dist_informal} to obtain our $O(\log\log n)$-approximate
mechanism, our plan, roughly speaking, is to use \cref{thm:good_dist_informal} in the same
fashion as \cref{thm:existence_of_dist}:
that is,
we use a vector $d$ sampled from the distribution $\mathcal{D}$ in
\cref{thm:good_dist_informal} as posted payments for the sellers. 
But implementing this plan takes some work in order to deal with the case where the
cost-vector $c$ is not handled by Theorem~\ref{thm:good_dist_informal} (i.e., we have 
$c(S^*)>B/O(\log\log n)$). 
We cannot simply use the mechanisms of~\cite{NeogiPS24,Bei2017WorstCaseMD}, 
replacing \cref{thm:existence_of_dist} with \cref{thm:good_dist_informal}, because
these mechanisms use oracles, such as knapsack-cover oracle~\cite{NeogiPS24}, approximate
XOS oracle~\cite{Bei2017WorstCaseMD}, that are different from (and somewhat more-powerful
than) a demand oracle.  
(In~\cite{NeogiPS24}, the set $S^*$ is obtained using an oracle they call a
knapsack-cover oracle; whereas in~\cite{Bei2017WorstCaseMD}, 
the case where $c(S^*)$ is large is dealt with by assuming that one has an XOS oracle for
an XOS-approximation of $v$.)
We assume only demand-oracle access in Theorem~\ref{thm:main_result}, so  
we need to come up with some novel ideas to deal with the case where $c(S^*)$ is large. 
Combining these with \cref{thm:good_dist_informal} leads to \cref{thm:main_result}.

\subsection{Technical overview}\label{sec:tech_overview}
\cref{thm:good_dist_informal} constitutes the technical core 
of our mechanism, and we begin by discussing the key ideas that go toward proving this result.

Fix a subadditive valuation $v:2^U\mapsto \mathbb{R}_+$.
For $B \in \mathbb{R}_+$ and $S \subseteq U$, let 
$\Delta^S(B) := \{x \in \mathbb{R}^S_+ : \sum_{e \in S}x_e \le B\}$.
We use $\Delta (B)$ as a shorthand for $\Delta ^U(B)$.
Fix $\gamma  \in (0,1]$. We focus on the case $S^*=U$ in
Theorem~\ref{thm:good_dist_informal}, which will allow us to convey 
the main ideas. 

The statements of \cref{thm:existence_of_dist} and
\cref{thm:good_dist_informal} can be viewed as statements about optimal strategies in a
two-player zero-sum 
item bidding game. 
We refer to the two players as the $d$-player and the $c$-player.
The $d$-player chooses a bid vector $d \in \Delta (B)$. 
For each $e\in U$, we interpret $d_e$ as the $d$-player's bid for the item $e$. The
restriction $d\in \Delta (B)$ ensures that the sum of the bids of the $d$-player
is within the budget $B$. 
The $c$-player chooses another vector $c \in \Delta (\gamma B)$, representing the
$c$-player's bids\footnote{The use of notation $c$ here is suggestive. We will later take
  this vector to be the vector of reported costs $\{c_e\}_{e \in U}$ from the
  budget-feasible mechanism design problem.}.  
The payoff of the game for the $d$-player is $v(\{e \in U: c_e\le d_e\} )$, that is, it is
the value of the set of items where the $d$-player's bid is higher than the $c$-player's
bid. The goal of the $d$-player is to maximize this payoff, and goal of the $c$-player is
to minimize this payoff. 

By von Neumann's minimax theorem,
this zero-sum game has a Nash Equilibrium over mixed-strategies for the $d$-player and
$c$-player, and the payoff to the $d$-player under a mixed Nash equilibrium is equal to
the max-min value of the game.
\footnote{von Neumann's minimax theorem requires a finite space of pure strategies, and
   so, more precisely, we need to work with a discretization $K$ of the strategy space
   $\Delta(B)$; accordingly, in \eqref{eq:NE}, the $\min_{c\in\Dt(B)}$ should really be
   $\min_{c\in K}$. 
  Under the assumption that the sellers use some polynomial number $s$ of bits to describe
  their reported cost, such a discretization is always possible since we may limit ourselves to
  $2^s$ possibilities for each entry of $d$ and $c$.}
Note that a mixed-strategy for the $d$-player is a distribution over vectors in 
$\Delta(B)$, and for the $c$-player, is a distribution over vectors in $\Delta (\gamma B)$. 
The max-min value corresponds to the $d$-player going first, followed by the $c$-player, so
we may assume that the $c$-player plays pure strategies only. 
Taking $K$ to be (more precisely, a discretization of) $\Delta (B)$, 
\cref{thm:existence_of_dist} shows that, for any $\gamma \le 1$, 
this max-min value is at least $\frac{1}{2}v(U)$. 
That is, 
\begin{equation}\label{eq:NE}
	\max_{\substack{\text{distribution } \mathcal{D} \\ \text{ over vectors in } \Delta (B)}} \min_{c \in \Delta (B)} \Ex_{d \sim \mathcal{D}} [v(\{e \in U:c_e\le d_e\} )] \ge \frac{1}{2}v(U).
\end{equation}

\cref{thm:good_dist_informal} can also be interpreted as a statement about finding
mixed-strategies for the item bidding game.  
It says that, when $\gamma = O(\frac{1}{\log \log n})$, 
{\em we can compute in polynomial time}, a mixed strategy for the $d$-player (that is, a
distribution over vectors in 
$\Delta (B)$) that achieves a payoff that is a $O(\log \log n)$-approximation to the
optimal payoff. 
That is,
\begin{equation}\label{eq:loglogn_dist}
	\min_{c \in \Delta (B/O(\log \log n))} \Ex_{d \sim \mathcal{D}} [v(\{e \in U:c_e\le d_e\} )] \ge \frac{v(U)}{O(\log \log n)}
\end{equation}

With this two-player-game view in mind,  
we next describe how to efficiently compute a good mixed-strategy for the $d$-player in
the item bidding game, which  
leads to the proof of \cref{thm:good_dist_informal}.

\paragraph{Finding good strategies.}
As mentioned earlier, computing the distribution in \cref{thm:existence_of_dist} entails
solving an exponentially-large LP, with variables and constraints for the players' pure
strategies, 
which we cannot hope to do in polynomial time even with demand oracles. 
Nonetheless, let us attempt to find such a distribution even if the guarantee on its
payoff is suboptimal. 
When $\gm=1$ 
both players 
pick a vector in $\Delta (B)$.  
In this setting, one can show that any pure strategy of the $d$-player fails to give 
payoff better than $\max_{e\in U}v(e)$ 
in the worst case (see 
\cref{lem:no_pure_strategy}). 
Thus, the $d$-player must resort to using mixed strategies in order to get a decent payoff.

How should the $d$-player randomize?
To gain intuition, suppose that there 
we have $k$ disjoint sets $S_1,\ldots,S_k\sse U$ with the property that each set $S_i$ has
large value, that is, $v(S_i)\ge \alpha v(U)$ for all $i=1,\ldots,k$, for some fixed
$\alpha \in (0,1)$.  
The $d$-player can utilize this partition to gain an advantage over the $c$-player.
The $d$-player selects a set $S$ from $S_1,\ldots,S_k$ uniformly at random and focuses
their entire budget $B$ on elements in $S$, giving bids of zero to elements in 
$U\setminus S$. The $c$-player does not know the random choice of the $d$-player.
No matter what bids are chosen by $c$-player, we will always have that $\Ex[c(S)] \leq \frac{B}{k}$. 
This is because, for any item $e$, the probability that $e\in S$ is at most $\frac{1}{k}$.

Herein lies the advantage of using randomization: the $d$-player gets to use their entire budget
of $B$ on the set of items in $S$, whereas the $c$-player has a budget of only
$\frac{B}{k}$ (in expectation) to use for items in $S$. 
Thus, the $d$-player can utilize the sets $S_1,\ldots,S_k$ to effectively boost his own
budget by a factor of $k$. 
As we will show, the diminished bidding power for the $c$-player over the items in $S$ will
make it possible for the $d$-player to find a \emph{pure strategy} over $S$ that achieves
a large payoff compared to $v(S)$. This is in contrast to the situation where both the
$d$-player and the $c$-player have the same budget $B$, for which no pure strategy
provides a good payoff for the $d$-player (\cref{lem:no_pure_strategy}).  
Since $v(S) \ge \alpha v(U)$, a strategy that attains a large payoff compared to $v(S)$
also attains a large payoff compared to $v(U)$, albeit with an $\alpha $-factor loss. 

This is, in essence, the idea underlying how we compute the distribution $\mathcal{D}$ in
\cref{thm:good_dist_informal}.  
However, there are a few things left unspecified and/or are oversimplified in the above exposition. 
(1) 
For what values of $\alpha $ and $k$ do there exist such disjoint sets 
$S_1,\ldots,S_k$ of $U$ such that $v(S_i)\ge \alpha v(U)$ for all $i=1,\ldots,k$, and how
can we find these sets in polynomial time?
(2) Once a set $S\in\{S_1,\ldots,S_k\}$ is sampled, how can the $d$-player find a pure
strategy over the items in $S$ that guarantees good payoff? 
(3) 
We have $\Ex[c(S)]\le \frac{B}{k}$, 
but it need not be true that the sampled set satisfies $c(S)\le \frac{B}{k}$ always.

Dealing with (3) is straightforward. We simply use Markov's inequality to bound $c(S)$. 
For any $\epsilon >0$, Markov's inequality guarantees that $c(S) \le \frac{B}{\epsilon k}$
with probability at least $ 1-\epsilon $. 
This has the downside of boosting the effective budget of the $c$-player by a factor of
$\frac{1}{\epsilon }$. 
We remedy this by asserting that the $c$-player pick a vector $c \in \mathbb{R}^U_+$ such
that $c(U) \le \epsilon B$ in the first place, that is, we assume that the $c$-player is
already restricted to a limited budget of $\epsilon B$. 
With this assumption, using Markov's inequality, we get that $c(S) \le \frac{B}{k}$ with
probability at least $1-\epsilon $. 
We end up using $\epsilon = \frac{1}{O(\log \log n)}$, which is why we require that $c(U)
\le  \frac{B}{O(\log \log n)}$ in the statement of \cref{thm:good_dist_informal}. 

\medskip
Dealing with issues (1) and (2) is more involved.
To address (1), we observe that we do not really need disjoint sets $S_1,\ldots,S_k$ with
$v(S_i)\ge \alpha v(U)$ for all $i=1,\ldots,k$. 
Sampling a set $S$ uniformly from $S_1,\ldots,S_k$ induces a distribution over sets with
the properties (i) $\Ex_S[v(S)]\ge \alpha v(U)$ and (ii) $\Pr[e \in S] \le \frac{1}{k}$. 
{\em We really only need a distribution $\mathcal{S}$ over subsets $S \subseteq U$
satisfying (i) and (ii) for our argument.} 
Property (i) ensures that the $d$-player retains a large fraction of the total value when
sampling a set $S$ from $\mathcal{S}$, and 
property (ii) ensures that $\Ex[c(S)] \le \frac{B}{k}$ when sampling $S$ from $\mathcal{S}$. 
This naturally leads us to the following LP, which is parameterized by a subset 
$S \subseteq U$ and value $\kappa \in (0,1)$, which finds the distribution of largest
expected value with marginals at most $\kappa$. 
\begin{alignat*}{2}
\text{maximize} & \quad & \sum_{T \subseteq S}v(T)& x_T \tag*{$\LPmax$} \label{distrlp} \\
\text{subject to} & \quad & \sum_{T : e \in T}x_T  &\le \kappa \qquad \forall e \in S \\
&& \sum_{T \subseteq S}x_T & \le 1 \\
&& x  & \ge 0.
\end{alignat*}
A feasible solution $x$ to \ref{distrlp} yields
a distribution over subsets $T$ of $S$, where 
we sample a set $T\neq \emptyset$ with probability $x_T$, and 
sample $\emptyset$ with probability $1-\sum_{T' \subset S}x_{T'}$. 
(which is non-negative due to the second constraint).  
The constraint $\sum_{T : e \in T}x_T \le \kappa $ for all $e \in S$ ensures that
$\Pr_{T}[e \in T] \le \kappa $. 
Moreover, the objective function $\sum_{T \subseteq S}v(T)x_T$ is simply $\Ex_{T}[v(T)]$.

It is not hard to show that we can solve \ref{distrlp} in polynomial time using demand
oracles, since the separation problem for the dual LP (see \ref{distrdual}) amounts to
precisely a demand-oracle computation.
Let $\Vmax(\kappa ,S)$ denote the optimal value of $\LPmax$.
The distribution corresponding to this optimal solution now replaces the partition $S_1,\ldots,S_k$.
The $d$-player now samples a set $S$ from this distribution and restricts their bids to elements in $S$.
In the special case when there exists a partition $S_1,\ldots,S_k$ satisfying $v(S_i)\ge \alpha v(U)$ for all $i=1,\ldots,k$, the uniform distribution over $S_1,\ldots,S_k$ is a feasible solution to $\LPmax[{\kappa,U}]$, showing that $\Vmax(\kappa ,U)\ge \alpha v(U)$.
However, the LP optimizes over a much richer family of distributions, and so the value of the LP can be high even when such a partition does not exist.

\medskip
Finally, to address (2), we show that the \emph{optimal solution to the dual of
\ref{distrlp}} (see \ref{distrdual}) can be used to design a pure strategy (that is, a
vector of bids) for the $d$-player. This is captured by the following lemma.

\begin{lemma}[Informal version of Lemma~\ref{lem:thresholds_simple}]\label{lem:thresholds_informal}
Fix $B \in \mathbb{R}_+$, $\kappa \in (0,1)$ and $S \subseteq U$.
Using demand oracles, we can compute in polynomial time, a vector 
$d \in \mathbb{R}^S_+$ such that $d(S) = B$, and 
$v(\{e \in S : c_e\le d_e\} ) \ge v(S)-\Vmax(\kappa ,S)$ holds for all 
$c\in\Dt^S(\kp B)$.
\end{lemma}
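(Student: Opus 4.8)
The plan is to derive $d$ from an optimal solution to the \emph{dual} of \ref{distrlp}. The dual has a nonnegative variable $y_e$ for each $e\in S$ (dual to the marginal constraint $\sum_{T:e\in T}x_T\le\kp$) and a nonnegative variable $\ld$ (dual to $\sum_{T}x_T\le 1$), and reads
\[
\min\ \kp\cdot y(S)+\ld\quad\text{s.t.}\quad y(T)+\ld\ge v(T)\ \ \forall\,T\sse S,\qquad y,\ld\ge 0.
\]
The separation problem for this dual --- given $(y,\ld)$, is there a $T$ with $v(T)-y(T)>\ld$? --- is precisely a demand query at price vector $y$. Hence by the ellipsoid method we can, in polynomial time with a demand oracle, compute an optimal dual solution $(y,\ld)$, which by LP strong duality satisfies $\kp\cdot y(S)+\ld=\Vmax(\kp,S)$.

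Given such $(y,\ld)$, I would set $d_e:=\frac{B}{y(S)}\,y_e$ for all $e\in S$ when $y(S)>0$, and $d_e:=B/|S|$ for all $e\in S$ when $y(S)=0$. In either case $d\in\R_+^S$ and $d(S)=B$, as required by the lemma.

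It remains to verify the payoff guarantee. Fix $c\in\Dt^S(\kp B)$, let $W:=\{e\in S:c_e\le d_e\}$ be the set the $d$-player wins, and let $L:=S\sm W$. Since $v$ is subadditive and $W\cup L=S$, we have $v(W)\ge v(S)-v(L)$, so it suffices to show $v(L)\le\Vmax(\kp,S)$. Dual feasibility applied to $T=L$ gives $v(L)\le y(L)+\ld$, so it is enough to prove $y(L)\le\kp\cdot y(S)$; then $v(L)\le\kp\cdot y(S)+\ld=\Vmax(\kp,S)$. When $y(S)=0$ this holds trivially (and in fact then $\Vmax(\kp,S)=\ld\ge v(S)$, making the whole claim immediate). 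When $y(S)>0$: every $e\in L$ satisfies $d_e<c_e$, hence $d(L)\le c(L)\le c(S)\le\kp B$; substituting $d_e=\frac{B}{y(S)}\,y_e$ yields $\frac{B}{y(S)}\,y(L)\le\kp B$, i.e.\ $y(L)\le\kp\cdot y(S)$, as needed.

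The key idea --- and the one place where something nonobvious happens --- is taking $d$ proportional to the optimal dual multipliers $y$: this is exactly what converts the $c$-player's budget bound $c(S)\le\kp B$ into the inequality $y(L)\le\kp\cdot y(S)$ that unlocks the dual feasibility constraint. The remaining ingredients (the demand oracle serving as the dual separation oracle, and the subadditivity bound $v(W)\ge v(S)-v(L)$) are routine; the only mild technicality I anticipate is the standard bookkeeping required to run the ellipsoid method with a possibly-approximate demand oracle, which I do not expect to affect the structure of the argument.
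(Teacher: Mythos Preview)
Your proposal is correct and follows essentially the same approach as the paper: solve the dual of \ref{distrlp} via the ellipsoid method using a demand oracle as separation oracle, set $d$ proportional to the optimal dual multipliers, and combine dual feasibility on the lost set $L$ with subadditivity to obtain the bound. You are in fact slightly more careful than the paper in explicitly handling the degenerate case $y(S)=0$, which the paper's definition $d_e=p_e\cdot B/p(S)$ leaves unaddressed.
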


In other words, the $d$-player can play a pure strategy over the items in $S$ that achieves good payoff as long as $\Vmax(\kappa ,S)$ is small.

This discussion leads to the following modified approach for the $d$-player.
\begin{enumerate}
	\item First, sample a set $S \subseteq U$ from the distribution given by the optimal solution of $\LPmax[{\kappa,U}]$ for some $\kappa $ to be decided later.
	\item Then, choose a good pure strategy for the $d$-player over the items in $S$ using \cref{lem:thresholds_informal}. Extend it to a pure strategy for all the items $U$ by setting zeroes for entries corresponding to elements in $U\setminus S$.
\end{enumerate}
The above two step process leads to a mixed strategy for the $d$-player, where the randomness comes from the random choice of $S$.

The only thing remaining is to decide what value of $\kappa$ to pick.
This is a balancing act. If $\kappa $ is too small, 
it could be that $\Vmax(\kappa ,U)$ is too small, which means that the $d$-player has no
hope of achieving a good payoff, even if he wins all the items in $S$. 
If $\kappa $ is too large, then \cref{lem:thresholds_informal} might fail to guarantee a good payoff (to see this, consider the extreme case of $\kappa =1$, in which case $\Vmax(\kappa ,S)=v(S)$ and so \cref{lem:thresholds_informal} provides no guarantee on the payoff).

For a fixed subset $S \subseteq U$, \cref{lem:thresholds_informal} guarantees a payoff of at least $v(S)-\Vmax(\kappa ,S)$.
When this set $S$ is sampled from the distribution $\mathcal{S}$ coming from the optimal solution of $\LPmax[{\kappa,U}]$, the expected payoff is at least $\Ex_{S \sim \mathcal{S}} [v(S)-\Vmax(\kappa ,S)]$.
The first term $\Ex_{S\sim \mathcal{S}}[v(S)]$ is simply $\Vmax(\kappa ,U)$.
It turns out that the second term $\Ex_{S\sim \mathcal{S}}[\Vmax(\kappa ,S)]$ can be upper bounded by $\Vmax(\kappa^2,U)$.
Thus, the expected payoff for the $d$-player using this strategy is at least $\Vmax(\kappa ,U)-\Vmax(\kappa ^2,U)$.
A result of \cite{Dtting2020AnOL} shows that there exists $\kappa \in (0,1)$ for which
$\Vmax(\kappa ,U)-\Vmax(\kappa ^2,U)$ is at least $\frac{1}{O(\log \log n)}\cdot v(U)$. 
This is the value of $\kappa $ that we will use for step (1).
Combining everything, we obtain a distribution $\mathcal{D}$ over vectors in $\Delta (B)$, for which $\Ex[v(\{e \in U:c_e\le d_e\} )]\ge \frac{1}{O(\log \log n)}v(U)$ for all $c \in \Delta (\frac{B}{O(\log \log n)})$.

\paragraph{Constructing the mechanism.}
To complement \cref{thm:good_dist_informal}, we need to 
show how one can use the distribution (or equivalently, mixed-strategy for the $d$-player) 
provided by \cref{thm:good_dist_informal} to design a budget-feasible mechanism achieving
a good approximation. 
Our mechanism (see Algorithm~\ref{bfmech-alg} in Section~\ref{bfmech})
builds upon the framework used by \cite{Bei2017WorstCaseMD} (also later used
by \cite{NeogiPS24}). 
We randomly partition the set of sellers into two groups $U_1,U_2$, get an estimate of
$\OPT$ from the sellers in $U_1$ who are then discarded, 
and work over the sellers in $U_2$. 
The set $S^*$ we use in \cref{thm:good_dist_informal} is obtained by computing 
a demand set over $U_2$ with appropriately chosen prices. 
This ensures that $v(S^*)=\Omega(\OPT)$.
As discussed earlier, we sample a posted-payment vector $d$ from the distribution $\mathcal{D}$ in
\cref{thm:good_dist_informal}, which  
yields the set of elements $R = \{e \in S^*:c_e\le d_e\}$. 
The value $d_e$ can be thought of as the \emph{threshold} for seller $e$, in the sense
that $e$ will not be picked into the solution if it declares a cost higher than
$d_e$. So we often refer to the vector $d$ as a \emph{threshold vector} for the elements in 
$S^*$. 
The resulting mechanism is truthful (due to Myerson's characterization of
truthfulness, \cref{lem:myersons}) and has payments bounded by $d_e$, and since  
$d(U)\leq B$, we obtain budget-feasibility. 
If $c(S^*)\le \frac{B}{O(\log \log n)}$, the approximation factor follows from the
guarantee in \cref{thm:good_dist_informal}. But we also need to deal with the case where
this does not hold.
This presents some technical challenges, and here, we need to proceed significantly
differently from the mechanisms of~\cite{NeogiPS24,Bei2017WorstCaseMD}. 
\cite{Bei2017WorstCaseMD} show in the analysis of their mechanism for subadditive
valuations in the Bayesian setting,
that when $c(S^*) > \frac{B}{\alpha }$, 
the subadditive valuation $v$ is actually ``$\alpha $-close to an XOS function'', and 
so one can use a mechanism for XOS valuations in this case.
However, this requires one to have a suitable oracle for this
XOS-approximation $v^\xos$ of $v$: either an XOS-oracle for $v^\xos$, or a demand oracle
for $v^\xos$, as required by the mechanisms for XOS valuations
in~\cite{Bei2017WorstCaseMD} and~\cite{NeogiPS24} respectively. 
We only have a demand oracle for $v$, so this presents a significant challenge. 
In the mechanism of~\cite{NeogiPS24} for subadditive valuations, $S^*$ is obtained
using what they call a knapsack-cover oracle for $v$, which ensures that
$c(S^*)\leq B$, so they can always resort to \cref{thm:existence_of_dist}. 
In our case, we do not have a knapsack-oracle for $v$, and we are using
\cref{thm:good_dist_informal}, so even if we 
had $c(S^*)\leq B$, we still need to deal with the case where $c(S^*)>B/O(\log\log n)$. 

We proceed instead as follows. 
We show that when $v$ is $\alpha$-close to an XOS function---which happens when
$c(S^*)>B/\al$---that we can solve another LP using a demand oracle to obtain another
suitable threshold-vector $d'$ (see steps~\ref{xosthresh}, \ref{threshprune} of
Algorithm~\ref{bfmech-alg}).  
We utilize a pruning operation to obtain a set $R'$ such that: 
$c_e \le d'_e$ for all $e\in R'$, and $d'(R')\leq B$.
We show that these two properties imply that $v(R')=\Omega(\OPT)$, 
and that it can be procured with payments at most the budget.

\subsection{Related work}

Budget-feasible mechanism design was introduced by Singer~\cite{Singer13}, who also gave $O(1)$-factor randomized mechanisms for submodular valuations.
This was followed by a sequence of work \cite{BalkanskiGGST22,Bei2017WorstCaseMD,AmanatidisBM16,AmanatidisBM16a,AmanatidisKS19,JalalyT18,LeonardiMSZ17,GravinJLZ20,Dobzinski2011MechanismsFC,ChenGL11} improving approximation ratios for budget feasible mechanisms under additive, submodular, XOS, or subadditive valuations.
Variants of the problem were also considered, such as incorporating additional downward monotone set family constraints~\cite{AmanatidisBM16a,LeonardiMSZ17,NeogiPS24,HuangHCT23}, considering large market assumptions~\cite{AnariGN14}, considering multi-unit generalizations~\cite{Chan2014TruthfulMP,Amanatidis2023PartialAI,Wu2019MultiunitBF,Klumper2022BudgetFM,Qiao2020TruthfulMD}, multidimensional generalizations~\cite{NeogiPS_multi}, or considering more general budget constraints~\cite{NeogiPS24}.
Variants incorporating experimental design problems were also looked at~\cite{HorelIM14}.

For additive valuations, \cite{GravinJLZ20} achieved an optimal approximation ratio of 2.
For submodular valuations, \cite{BalkanskiGGST22} provide a deterministic budget feasible
mechanism with an approximation ratio of 4.75, which is the current best known
approximation ratio. 
For XOS and subadditive valuations, it is known that no approximation better than
$\sqrt{n}$ is possible using only value oracle queries~\cite{Singer10}, so work on these
valuation classes has always assumed access to at least demand oracles. 
For XOS valuations, an $O(1)$-approximation was known using demand oracles and
XOS oracles~\cite{Bei2017WorstCaseMD,AmanatidisBM16}, but only recently a polytime
$O(1)$-approximation was obtained using only demand oracles~\cite{NeogiPS24}; the latter
work also achieves the current-best approximation for XOS valuations.
For subadditive valuations, there is an exponential-time $O(1)$-approximate budget
feasible mechanism~\cite{NeogiPS24}, and also polytime 
mechanisms that achieve an $O(\log n / \log \log n)$-approximation~\cite{Bei2017WorstCaseMD,BalkanskiGGST22}. 
A polynomial-time $O(1)$-approximation for subadditive functions using demand oracles
still eludes our current understanding. Our work makes progress towards this goal.  

The use of distributions with bounded marginals, as in $\LPmax$, 
has been considered before in the context of subadditive
valuations~\cite{Dtting2020AnOL,Bangachev2023qPartitioningVE}.
Importantly, \cite{Dtting2020AnOL} use this LP to construct an $O(\log \log n)$-prophet
inequality for online subadditive combinatorial auctions.
The existence of a Nash equilibrium for our item-bidding zero-sum game that achieves
payoff $\frac{1}{2}v(U)$ was implicit in the work of \cite{Bei2017WorstCaseMD,NeogiPS24}.
Variants of our item-bidding game, in the context of first-price and second-price auctions over subadditive valuations, have been analyzed in \cite{Bhawalkar2011WelfareGF,Hassidim2011NonpriceEI,Feldman2015SimultaneousAW}.
Our item-bidding game can also be seen as a generalization of the Colonel Blotto game to subadditive valuations~\cite{borel1953theory}. The Colonel Blotto game has attracted recent attention in theoretical computer science~\cite{Roberson2006TheCB,Vu2018EfficientCO,BoixAdser2020TheMC,Behnezhad2022FastAS}.

\subsection*{Future directions}
Finding a polynomial time constant-factor approximation for subadditive budget feasible
mechanisms remains an interesting open question.
As our proof shows, to do so, it would suffice to find a value of $\kappa $ such that
$\Vmax(\kappa ,U)-\Vmax(\kappa ^2,U) \ge \frac{1}{O(1)}v(U)$.  (Recall that
$\Vmax(\kappa,U)$ is the 
maximum expected value of any distribution over subsets of $U$ with marginals
at most $\kappa $.) Finding such a $\kappa$ would also imply a constant factor prophet
inequality for subadditive combinatorial auctions via static posted prices by the work of
\cite{Dtting2020AnOL}. 
However, \cite{Dtting2020AnOL} provide a counter-example showing that there is a
subadditive valuation $v$ for which no such $\kp$ exists, and that their guarantee of
$\Vmax(\kp,U)-\Vmax(\kp^2,U)\geq v(U)/O(\log\log n)$ is in fact tight.
Instead, they suggest considering \emph{non-uniform} marginal bounds, where each element
$e \in U$ has a separate value $\kappa _e$, and we optimize over distributions over
subsets $S \subseteq U$ satisfying $\Pr[e \in S] \le \kappa _e$. If we are able to find a
non-uniform $\kappa \in (0,1)^U$ vector for which $\Vmax(\kappa ,U)-\Vmax(\{\kappa^2_e\}_e,U)\ge
\frac{1}{O(1)}v(U)$, this would lead to an $O(1)$-approximation for both subadditive
budget-feasible mechanisms and online subadditive combinatorial auctions. 

The key role played by bounded-marginal distributions with large expected value, both 
in the design of budget-feasible mechanisms for subadditive valuations, and
in the design of 
a prophet-inequality for subadditive combinatorial auctions in~\cite{Dtting2020AnOL},  
suggests two pertinent lines of inquiry.
First, it brings into further focus the need for understanding the power of this tool,
in particular, how one may analyze and exploit distributions with non-uniform marginals,
as discussed above.
Second, it raises the question of whether there is a deeper connection between the two,
seemingly disparate problem domains of budget-feasible mechanism design and prophet
inequalities for combinatorial auctions.
In particular, is there a connection between the structural properties of subadditive
functions given by Theorems~\ref{thm:existence_of_dist} and~\ref{thm:good_dist_informal},
and the existence of good distributions with non-uniform marginals?

\section{Preliminaries} \label{prelim}

Let $U$ be the set of elements, and let $n:=|U|$. Let $v:2^U \to \mathbb{R}_+^U$ be the
valuation function. In budget-feasible mechanism design, the valuation function is always
assumed to be normalized, that is, $v(\emptyset)=0$. A valuation function $v:2^U\to
\mathbb{R}_+$ is said to be \emph{subadditive} if $v(S)+v(T)\ge v(S\cup T)$ holds for all
$S,T\subseteq U$. We will also assume that the valuation function is monotone, that is,
$v(S)\le v(T)$ for all $S \subseteq T \subseteq U$. This assumption is without loss of
generality as one can obtain a mechanism for non-monotone subadditive valuations if one
has a mechanism for monotone subadditive valuations with the same approximation ratio (see
\cite{Bei2017WorstCaseMD}). 

For an element $e \in U$ and subset $S \subseteq U$, we use $v(e)$ as a shorthand for
$v(\{e\} )$, and $S+e$, $S-e$ as shorthands for $S \cup \{e\}$ and $S \setminus  \{e\}$
respectively. 
For a set $S \subseteq U$ and a vector $p \in \mathbb{R}^U_+$, we use $p(S)$ as a shorthand to denote $\sum_{e \in S}p_e$.
For $B \in \mathbb{R}_+$ and $S \subseteq U$, let $\Delta^S(B) = \{x \in \mathbb{R}^S_+ : x(S) \le B\}$.
We use $\Delta (B)$ as a shorthand for $\Delta ^U(B)$.
For a cost vector $c \in \mathbb{R}^U$, we denote by $c_{-e}$ the vector in
$\mathbb{R}^{U-e}$ obtained by dropping the coordinate corresponding to $e$. 
Moreover, for $c'_e \in \mathbb{R}$, we denote by $(c'_e,c_{-e})$ the vector $x$ in
$\mathbb{R}^U$ where $x_{e'} = c_{e'}$ for all $e' \in U-e$, and $x_e=c'_e$. 

Throughout, $\OPT := \max\{v(S):c(S)\le B\}$ denotes the optimal value. 
We may assume, using standard preprocessing, that every element $e \in U$ satisfies
$c_e\le B$, as discarding elements that do not satisfy this affects neither the
approximation factor nor truthfulness.
Let $e^*$ denote $\argmax_{e \in U}v(e)$.
Note that the preprocessing guarantees that $\OPT\ge v(e^*)$.

Our mechanism will use a random-partitioning step to compute a good estimate for $\OPT$.
Let $U_1,U_2$ be a random partition of $U$ obtained by placing each element of $U$ independently with probability $\frac{1}{2}$ in $U_1$ or $U_2$.
Let $V_1^* = \max\{v(S):c(S)\le B,S \subseteq U_1\}$ be the optimum over $U_1$, and let $V_2^* = \max\{v(S):c(S)\le B,S \subseteq U_2\}$ be the optimum over $U_2$.
The subadditivity of $v$ can be used to show that both $V_1^*,V_2^*$ are large with high probability.

\begin{lemma}[\cite{NeogiPS24}] \label{lem:partitioning}
	Let $v:2^U\to \mathbb{R}_+$ be a subadditive valuation function.
	Then, we have \linebreak
        \mbox{$\Pr[V_2^* \ge \frac{\OPT}{2},\, V_2^* \ge V_1^* \ge \frac{\OPT-v(e^*)}{4}]\ge \frac{1}{4}$}.
\end{lemma}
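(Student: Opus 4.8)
The plan is to fix an optimal set $S^*$ (so $v(S^*)=\OPT$ and $c(S^*)\le B$) and reduce the whole statement to a single lower-tail estimate. Since $S^*\cap U_i\sse U_i$ and $c(S^*\cap U_i)\le c(S^*)\le B$, the set $S^*\cap U_i$ is a candidate for $V_i^*$, so $V_i^*\ge v(S^*\cap U_i)$; trivially $V_i^*\le\OPT$. Subadditivity gives $v(S^*\cap U_1)+v(S^*\cap U_2)\ge v(S^*)=\OPT$, hence $V_1^*+V_2^*\ge\OPT$. Consequently, on the event $\{V_2^*\ge V_1^*\}$ we automatically get $2V_2^*\ge V_1^*+V_2^*\ge\OPT$, i.e.\ $V_2^*\ge\OPT/2$; thus the event in the lemma contains $\{V_2^*\ge V_1^*\ge\tfrac{\OPT-v(e^*)}4\}$, and it suffices to lower bound the probability of the latter by $\tfrac14$. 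Its complement is $\{V_1^*>V_2^*\}\cup\{V_1^*<\tfrac{\OPT-v(e^*)}4\}$. The relabeling $U_1\leftrightarrow U_2$ preserves the distribution of the random partition and swaps $V_1^*$ and $V_2^*$, so $\Pr[V_1^*>V_2^*]=\Pr[V_2^*>V_1^*]\le\tfrac12$. Hence everything reduces to showing $\Pr[V_1^*<\tfrac{\OPT-v(e^*)}4]\le\tfrac14$, and since $V_1^*\ge v(S^*\cap U_1)$ it is enough to prove the lower-tail bound $\Pr[v(S^*\cap U_1)\ge\tfrac{\OPT-v(e^*)}4]\ge\tfrac34$.

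To prove this bound, write $\tht:=\tfrac{\OPT-v(e^*)}4$ and recall $v(e)\le v(e^*)$ for every $e$. I would split into two cases. If at least two elements $f,g\in S^*$ satisfy $v(f),v(g)\ge\tht$, then whenever $U_1$ contains $f$ or $g$ — an event of probability $1-\tfrac14=\tfrac34$ — monotonicity gives $v(S^*\cap U_1)\ge\min(v(f),v(g))\ge\tht$, and we are done. Otherwise at most one element $g\in S^*$ has $v(g)\ge\tht$; put $W:=S^*\sm\{g\}$, so that $v(W)\ge v(S^*)-v(g)\ge\OPT-v(e^*)=4\tht$ by subadditivity, while every element of $W$ has value $<\tht\le v(W)/4$. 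As $W\sse S^*$, it suffices to show $\Pr[v(W\cap U_1)\ge v(W)/4]\ge\tfrac34$. For this I would express $W\cap U_1$ as the union of two parts of a uniformly random $4$-partition $W=Q_1\sqcup Q_2\sqcup Q_3\sqcup Q_4$, so that $v(W\cap U_1)\ge\max(v(Q_1),v(Q_2))$ and $\sum_i v(Q_i)\ge v(W)$; using the exchangeability of $(v(Q_i))_i$ together with the crucial structural input that no single element of $W$ — and, via a greedy blocking of $W$ into pieces of value in $[v(W)/4,v(W)/2)$, no ``small'' sub-collection — can carry more than $v(W)/4$ of the value, one concludes that the larger of the two parts inside $U_1$ has value at least $v(W)/4$ except with probability at most $\tfrac14$.

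I expect the second case above to be the crux: one must convert the hypothesis $\max_{e\in W}v(e)<v(W)/4$ into a genuine anti-concentration statement for $v(W\cap U_1)$. The delicate point is that for a general subadditive $v$ one may have $\sum_{e\in W}v(e)\gg v(W)$, which rules out treating $v(W\cap U_1)$ as a bounded-difference functional and applying off-the-shelf Chernoff/Azuma inequalities; instead one must exploit subadditivity's partition inequality directly, via the exchangeability/counting argument over a multiway split combined with the greedy blocking. The remaining ingredients — the reduction in the first paragraph and the two-large-elements case — are routine.
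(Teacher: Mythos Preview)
The paper does not prove this lemma---it is cited from \cite{NeogiPS24}---so there is no in-paper argument to compare against. Your reduction in the first paragraph is correct, as is Case~1. The gap is squarely in Case~2, and it is a real one rather than a routine detail you can fill in later.

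From the coupling $W\cap U_1=Q_1\cup Q_2$ and subadditivity, all that exchangeability buys you is $\Pr\bigl[\max(v(Q_1),v(Q_2))<v(W)/4\bigr]\le\tfrac12$: the events $E_{12}:=\{v(Q_1),v(Q_2)<v(W)/4\}$ and $E_{34}:=\{v(Q_3),v(Q_4)<v(W)/4\}$ are equiprobable and disjoint (some $v(Q_i)\ge v(W)/4$ always), so each has probability at most $\tfrac12$. Nothing you wrote rules out $\Pr[E_{12}]$ being essentially $\tfrac12$---if, for instance, the number $N$ of parts with $v(Q_i)\ge v(W)/4$ equals $1$ with probability near $1$, then by symmetry $\Pr[E_{12}]\approx\tfrac12$. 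Feeding $\tfrac12$ into your union bound yields $\Pr[V_2^*\ge V_1^*\ge\theta]\ge 1-\tfrac12-\tfrac12=0$, which is vacuous. Your greedy blocking guarantees only two disjoint blocks $B_1,B_2$ with $v(B_j)\in[v(W)/4,v(W)/2)$ (the leftover has value $<v(W)/4$, each closed block has value $<v(W)/2$, so subadditivity forces at least two closed blocks), and you never explain how two deterministic blocks interact with the random $4$-partition to lift $\tfrac12$ to $\tfrac34$; note that $\Pr[B_j\subseteq U_1]=2^{-|B_j|}$ can be negligible. So Case~2 as sketched does not close: you need either a genuinely different anti-concentration argument for $v(W\cap U_1)$, or a reduction that avoids the marginal requirement $\Pr[v(S^*\cap U_1)\ge\theta]\ge\tfrac34$ altogether.
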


\paragraph{Oracles for accessing the valuation.}
Explicitly representing a subadditive valuation function would require space exponential
in $n$, so in order to meaningfully talk about computational efficiency, we assume that the valuation
function is specified via a suitable oracle.
The most natural oracle one could consider is a \emph{value oracle} which takes as input a subset
$S \subseteq U$ and returns its value $v(S)$. 
However, \cite{Singer10} showed that one needs exponentially many value oracle
queries to achieve approximation ratio better than $\sqrt{n}$ for subadditive valuations;
in fact, this lower bound holds even for the subclass of XOS valuations.  
Therefore, in keeping with prior work on budget-feasible mechanism design (when
considering valuations more general than submodular functions), we assume access to a
\emph{demand oracle}, which takes as input prices $p_e$ for each $e \in U$ and returns a
subset $T \in \argmax_{S \subseteq U}\{v(S)-p(S)\}$.  
Demand oracles are often used in the field of mechanism design, and
they have a simple economic interpretation: the oracle returns the subset of items in $U$
with maximum utility, when the cost of purchasing an item $e$ is $p_e$. 
A demand oracle can also be used to compute an approximation to $\OPT$.

\begin{lemma}[\cite{Badanidiyuru2018OptimizationWD}]\label{lem:dem_oracle_approx}
	Let $v:2^U \to \mathbb{R}_+$ be subadditive. For any $\epsilon >0$, one can compute a $(2+\epsilon )$-approximate solution to $\OPT$ in polynomial time using demand oracles.
\end{lemma}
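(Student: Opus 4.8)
The plan is to solve the natural \emph{fractional} (Lagrangian) relaxation of the knapsack-constrained problem using the demand oracle, and then round it to an integral feasible set losing only a factor of $2$, with a preliminary guessing step to absorb the $\epsilon$. Consider the relaxation $\fopt := \max\{\sum_{S}v(S)y_S : \sum_S c(S)y_S \le B,\ \sum_S y_S \le 1,\ y\ge 0\}$; putting all weight on an optimal set shows $\fopt \ge \OPT$. Its dual is $\min\{B\lambda + \mu : \lambda c(S) + \mu \ge v(S)\ \forall S,\ \lambda,\mu\ge 0\}$, and the dual separation problem---given $(\lambda,\mu)$, find $S$ with $v(S)-\lambda c(S) > \mu$---is \emph{exactly} a demand query with prices $\lambda c_e$. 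Hence one can (approximately) solve the relaxation in polynomial time via the ellipsoid method; the separation oracle returns only polynomially many sets, and re-solving the LP over just these yields a primal optimum supported on at most two sets $S_1,S_2$, together with the associated multiplier $\lambda^*$ (equivalently: a $\lambda^*$ and a demand set at $\lambda^*$ whose cost straddles $B$).

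The second step is to convert this $2$-sparse fractional solution into a single budget-feasible set of value $\ge \fopt/2 \ge \OPT/2$. This mirrors the classical argument for the additive knapsack problem---the better of a greedy prefix and the best single item is within a factor $2$ of the LP value---except that here the role of the ``greedy prefix'' is played by a demand set. Concretely: if both $S_1$ and $S_2$ are budget-feasible, the more valuable one already has value $\ge \fopt$; otherwise we fall back either on the singleton $\{e^*\}$ (which is feasible since preprocessing guarantees $c_{e^*}\le B$, and satisfies $v(e^*)\le \OPT$) or on a greedily-truncated subset of the infeasible demand set, and argue via subadditivity that one of these candidates has value at least $\fopt/2$.

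For the $\epsilon$: $\OPT$ lies in $[v(e^*),\ n\cdot v(e^*)]$ by subadditivity and preprocessing, so guessing $V$ over the geometric grid $\{v(e^*)(1+\epsilon)^i\}$ of polynomial size includes a value with $V \le \OPT \le (1+\epsilon)V$; running the procedure above for each guess, with the ellipsoid tolerance set to $O(\epsilon)$, and outputting the best feasible candidate found, yields an overall factor of $2+\epsilon$ in polynomial time.

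The main obstacle is precisely the rounding step for \emph{subadditive} (rather than additive or XOS) valuations: a high-value but infeasible set need not contain a high-value feasible subset that can be certified by value-oracle reasoning, since subadditivity only gives the \emph{upper} bound $v(S)\le \sum_e v(e)$ and no lower bound on the value of a truncated prefix (indeed $v$ can assign value $v(e^*)$ to arbitrarily large sets). Overcoming this requires exploiting the extra structure of a \emph{demand} set $T$ at multiplier $\lambda^*$---every $e\in T$ has marginal value $v(T)-v(T-e)\ge \lambda^* c_e$---to control how much value is destroyed by truncating $T$ down to the budget, and to combine this bound with the singleton fallback; this, together with the bookkeeping needed to make the ellipsoid-based LP solve produce a genuinely $2$-sparse primal solution using only demand queries, is where the real work lies.
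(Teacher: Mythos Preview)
The paper does not prove this lemma; it is quoted from \cite{Badanidiyuru2018OptimizationWD} and used as a black box, so there is no in-paper proof to compare your attempt against.

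As for your outline on its own merits: the LP relaxation and its dual are set up correctly, dual separation is indeed exactly a demand query, and a basic optimal primal is supported on at most two sets, so the ellipsoid portion is fine. But you explicitly leave the rounding step unfinished (``this is where the real work lies''), and the hint you offer points the wrong way. The inequality $v(T)-v(T-e)\ge\lambda^* c_e$, which does follow from $T$ being a demand set, is a \emph{lower} bound on the value lost by dropping $e$; it does not upper-bound that loss, so it cannot by itself certify that a truncated prefix retains enough value. The usable direction combines demand optimality with subadditivity: for any $T'\subseteq T$, applying the demand inequality to $T\setminus T'$ gives $v(T\setminus T')\le v(T)-\lambda^* c(T')$, and then subadditivity yields $v(T')\ge v(T)-v(T\setminus T')\ge \lambda^* c(T')$. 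Pairing such a prefix (with $c(T')$ close to $B$) against the feasible demand set $S^-$ (which satisfies $v(S^-)\ge\mu^*$) and the singleton fallback, and using $\fopt=\lambda^* B+\mu^*$, one does obtain a constant-factor guarantee---but extracting the precise factor $2$ rather than a larger constant from this line of reasoning requires a sharper case analysis than your sketch provides. As written, the proposal is a plan with a genuine, self-acknowledged gap rather than a proof.
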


It is important that the demand oracle uses a consistent tie-breaking rule in case there
are multiple sets that attain the maximum $\max_{S \subseteq U}\bigl(v(S)-p(S)\bigr)$.
This can always be achieved, since fixing some ordering of the elements of $U$,
one can always perturb the prices $p$ to obtain a set that is lexicographically
smallest (according to the ordering) among all sets that attain the maximum 
(see \cite{NeogiPS24}). 

\paragraph{Mechanism design.}
A mechanism consists of an allocation rule $f$ along with a payment scheme $p$. 
In the setting of budget-feasible mechanism design, we are given as input the publicly-known information of the valuation function $v:2^U \to \mathbb{R}_+$, the budget $B \in \mathbb{R}_+$, and the seller's reported costs $\{c_e\}_{e \in U}$.
The allocation rule, or algorithm, $f$ outputs a subset $S \subseteq U$ of sellers whose
items are purchased by the mechanism designer. 
We refer to the set $S$ returned by $f$ as the set of \emph{winners}.
The payment scheme $p$ then assigns to each seller a payment $p_e$.
The utility of seller $e$, when her true private cost is $\overline{c}_e\ge 0$ is $u_e = p_e - \overline{c}_e$ if $e \in S$ and $u_e=p_e$ if $e \not\in S$.
Each seller reports a cost that maximizes her own utility.
Note that $f,\{p_e\}_{e \in U}$ are functions of $v,B$ and $c$.
For ease of notation, we treat $v$ and $B$ as fixed and treat $f,\{p_e\}_{e \in U}$ as functions of $c$ only. 
Additionally, $u_e$ is a function of $v,B$, the reported costs $c$ and the true cost $\overline{c}_e$ of seller $e$.
For ease of notation, we treat $v,B$ as fixed and refer to $u_e(\overline{c}_e;c_e,c_{-e})$ as the utility of seller $e$ when the reported costs are $c$ and the true cost of seller $e$ is $\overline{c}_e$.
We seek a mechanism $\mathcal{M}=(f,p)$ satisfying the following properties.

\begin{itemize}
	\item \textbf{Truthfulness.} Each seller $e$ maximizes her utility by reporting her true private cost. That is, for every $\overline{c}_e,c_e,c_{-e}$, we have $u_e(\overline{c}_e;\overline{c}_e,c_{-e}) \ge u_e(\overline{c}_e;c_e,c_{-e})$.
	\item \textbf{Individual Rationality.} $u_e(\overline{c}_e;\overline{c}_e,c_{-e})\ge 0$ for every $e$ and every $\overline{c}_e,c_{-e}$. Note that this implies that $p_e(c)\ge 0$ for all $c$.
	\item \textbf{No positive transfers.} We do not pay sellers whose items we do not purchase, that is, $p_e(c)=0$ whenever $e \not\in f(c)$.
	\item \textbf{Budget Feasibility.} We have $\sum_{e \in U}p_e(c)\le B$ for all $c$. Assuming no positive transfers, this is equivalent to $\sum_{e \in f(c)}p_e(c)\le B$. Note that if $\mathcal{M}$ is individually rational, this implies $c(f(c))\le B$.
\end{itemize}

A \emph{budget feasible mechanism} is a mechanism that satisfies the above properties.
Our mechanism is randomized, and will satisfy the above properties 
with probability $1$, that is, under all realizations of its random bits. A randomized
mechanism that is truthful with probability $1$ is also called a 
{\em universally-truthful} mechanism. 

In addition, we want the mechanism to return a solution that has value close to
$\OPT:=\max\{v(S):c(S)\le B,S\subseteq U\}$. 
We say that $\mathcal{M}$ achieves an $\alpha$-approximation if $f$ is a $\alpha
$-approximate algorithm, that is, $v(f(c))\ge \frac{\OPT}{\alpha }$, where $\alpha \ge 1$.
When $\Mc$ is randomized, we say that $\Mc$ achieves an $\al$-approximation if the expected
value obtained is at least $\OPT/\al$.

Budget-feasible mechanism design is a single-parameter setting. In single-parameter
settings, Myerson's Lemma gives a powerful characterization truthfulness. 
We say that a deterministic algorithm $f$ is \emph{monotone} if for every seller $e \in
U$, and every $c_e\ge c_e' \in \mathbb{R}_+$, and every $c_{-e} \in \mathbb{R}_+^{U-e}$,
if $e \in f(c_e,c_{-e})$ then $e \in f(c'_e,c_{-e})$. That is, $f$ is monotone if a winner
remains a winner upon decreasing its reported cost. 

\begin{theorem}[Truthfulness in single-parameter domains~\cite{Myerson81}]\label{lem:myersons}
Given an algorithm $f$ for budget-feasible mechanism design (i.e., for approximately computing
$\OPT$), there exists payment functions $\{p_e\}_{e \in U}$ such that $(f,p)$ is a
truthful mechanism if and only if $f$ is monotone. 

Moreover, suppose that $f$ is monotone, and $\tau _e = \tau _e(c_{-e}) := \sup\{c_e\ge 0:
e \in f(c_e,c_{-e})\}$ is finite for every $e \in U$ and $c_{-e} \in
\mathbb{R}_+^{U-e}$. Then setting $p_e(c) = \tau _e(c_{-e})$ if $e$ is a winner and 0
otherwise, is a payment scheme that yields a truthful, individually-rational mechanism
with no positive transfers. 
We refer to $\tau_e$ as the \emph{threshold price} or \emph{threshold payment} for seller
$e$.  
\end{theorem}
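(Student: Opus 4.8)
The plan is to fix a seller $e$ and the reports $c_{-e}$ of everyone else, and reduce everything to a one-dimensional question about $e$'s single report. Write $x_e(c_e)\in\{0,1\}$ for the indicator that $e\in f(c_e,c_{-e})$, and $p_e(c_e)$ for $e$'s payment; under no positive transfers $p_e(c_e)=0$ whenever $x_e(c_e)=0$. When $e$'s true cost is $t$ and she reports $c_e$, her utility is $u_e(t;c_e)=p_e(c_e)-t\,x_e(c_e)$, so truthfulness is precisely the family of inequalities $u_e(t;t)\ge u_e(t;c_e)$ for all $t,c_e\ge 0$.

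\emph{Necessity (a truthful payment scheme exists $\Rightarrow$ $f$ is monotone).} I would run the standard two-inequality exchange argument, which uses only the truthfulness inequalities. Take two reports $t<c_e$, write the truthfulness inequality for true cost $t$ against the deviation $c_e$, and the one for true cost $c_e$ against the deviation $t$, and add them. The payment terms $p_e(t),p_e(c_e)$ cancel, leaving $(t-c_e)\bigl(x_e(c_e)-x_e(t)\bigr)\ge 0$. Since $t-c_e<0$, this forces $x_e(t)\ge x_e(c_e)$: if $e$ wins at the higher report $c_e$, she wins at the lower report $t$. This is exactly the monotonicity of $f$ from the definition.

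\emph{Sufficiency and the explicit payment rule.} Now assume $f$ is monotone and $\tau_e=\tau_e(c_{-e})=\sup\{c_e\ge 0: e\in f(c_e,c_{-e})\}$ is finite. The key structural consequence of monotonicity, which I would establish directly from the definition of the supremum (without assuming it is attained), is that $e\in f(c_e,c_{-e})$ for every $c_e<\tau_e$ — given $c_e<\tau_e$, pick a winning report $c_e'$ with $t:=c_e<c_e'\le\tau_e$, which exists since $\tau_e$ is a sup of winning reports, and then monotonicity transfers the win down to $c_e$ — while $e\notin f(c_e,c_{-e})$ for every $c_e>\tau_e$. Define $p_e(c)=\tau_e(c_{-e})$ if $e$ wins and $0$ otherwise; no positive transfers is then immediate, and individual rationality holds because a winner with true cost $t$ satisfies $t\le\tau_e$ and so gets utility $\tau_e-t\ge 0$, while a loser gets $0$. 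For truthfulness I would do a short case analysis on the true cost $t$ relative to $\tau=\tau_e(c_{-e})$: if $t<\tau$, truthful reporting wins with utility $\tau-t>0$, any report above $\tau$ loses and gives $0$, and any report below $\tau$ also wins and gives the same $\tau-t$ (the payment depends only on whether $e$ wins, not on the exact report); if $t>\tau$, truthful reporting loses with utility $0$ while any winning report gives $\tau-t<0$; if $t=\tau$, every report yields utility $0$. In all cases truthful reporting is weakly optimal. This also supplies the missing direction of the iff: when $f$ is monotone (and all $\tau_e$ are finite, which after the preprocessing $c_e\le B$ is automatic since then $\tau_e\le B$), these payments witness the existence of a truthful scheme.

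\emph{Main obstacle.} There is no real obstacle here — this is the classical Myerson characterization specialized to binary single-parameter allocations — but two points deserve care. First, the behaviour at the boundary report $c_e=\tau_e$, where $f$'s tie-breaking may make $e$ win or lose: one must check this never lets a seller gain, which the case analysis handles because at $t=\tau$ every outcome has utility $0$. Second, the fact that $e$ wins at \emph{all} reports $c_e<\tau_e$ must be deduced from the supremum together with monotonicity rather than assumed, as noted above.
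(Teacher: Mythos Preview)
Your argument is correct and is the standard Myerson characterization for binary single-parameter domains. Note, however, that the paper does not actually prove this statement: it is quoted as a known result with a citation to~\cite{Myerson81}, so there is no paper proof to compare against. Your parenthetical that the preprocessing $c_e\le B$ automatically forces $\tau_e\le B$ is not quite justified (preprocessing discards sellers based on their reported cost, which does not by itself bound the threshold of the remaining sellers), but this is immaterial since the theorem statement simply assumes finiteness of $\tau_e$.
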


\paragraph{Optimizing over distributions with bounded marginals.}
Any feasible solution $x$ to $\LPmax$ can be viewed as a distribution over subsets $T$ of
$S$. We abuse notation and denote by $T \sim x$ as a random set $T$ which takes value $T'$
with probability $x_{T'}$, and $\emptyset$ with probability $1-\sum_{T' \subset S}x_{T'}$
(which is non-negative due to the second constraint). 
We use the notation $\Vmax(\kappa ,S)$ to denote the optimal value of $\LPmax$.
The dual of $\LPmax$ is
\begin{alignat}{2}
\text{minimize} & \qquad &\kappa p(S) &+ \mu \tag*{\dlp} \label{distrdual} \\
\text{subject to} & \qquad & p(T) + \mu &\ge v(T) \qquad \forall T \subseteq S \notag \\
&& p,\mu  & \ge 0. \notag
\end{alignat}

$\LPmax$ and its dual can be solved using a demand oracle.
This can be done by the Ellipsoid method, where a demand oracle can be used to construct a separation oracle for the dual of $\LPmax$.
This is done in the following manner:
Given a vector $(p,\mu )$, in order to check if its feasible, we need to check whether $v(T)-p(T) \le \mu $ holds for all $T \subseteq S$.
This can be done by calling a demand oracle on prices $p$, and checking if $\max\{v(T)-p(T):T \subseteq U\} \le \mu $.
If it does, then $(p,\mu )$ is feasible, otherwise the maximum demand set $T^*$ satisfies $v(T^*)-p(T^*)> \mu $ which serves as the violated constraint.
Given this separation oracle for the dual, the Ellipsoid method can then be used to find an optimal solution for both the LP and its dual.

Just as in the case for demand oracles, we will assume that our LP is solved with a consistent tie-breaking rule that selects a canonical optimal solution in the case that there are multiple optimal solutions.
That is, the solver returns a fixed optimal primal solution $x$ and optimal dual solution $(p,\mu )$ once $S$ and $\kappa $ are fixed.

\section{Finding a good distribution over threshold vectors}\label{sec:good_dist}

The main result of this section is to prove \cref{thm:good_dist_informal}, which is restated more precisely below.

\begin{theorem}\label{thm:good_dist}
  Let $B \in \mathbb{R}_+$ and $S^* \subseteq U$. Let $n:=|U|\geq 8$.
  Using demand oracles, one can compute in polynomial time a distribution
  $\mathcal{D}[S^*]$ over vectors $d \in \Delta^{S^*}(B)$ such that  
\[
\Ex_{d\sim \mathcal{D}[S^*]} [ v(\{e \in S^*:c_e\le d_e\} ) ] \ge \frac{v(S^*)}{16 \log\log n}
\qquad \text{for all vectors $c \in \Delta ^{S^*}(B/16 \log \log n)$}.
\]
\end{theorem}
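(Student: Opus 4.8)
The plan is to follow the roadmap laid out in the technical overview, instantiating the $d$-player's two-step strategy as the distribution $\mathcal{D}[S^*]$. Throughout, write $S = S^*$ for brevity; by monotonicity and subadditivity we may assume $v(S) > 0$ (otherwise output any fixed vector). First I would set $\kappa \in (0,1)$ to be the value guaranteed by the result of Döetting et al.~\cite{Dtting2020AnOL}, for which $\Vmax(\kappa, S) - \Vmax(\kappa^2, S) \ge \frac{v(S)}{O(\log\log n)}$; since $n \ge 8$ this quantity is positive and the constant can be pinned down to give the $16\log\log n$ in the statement. I would then solve $\LPmax[\kappa,S]$ and its dual \ref{distrdual} in polynomial time using the demand oracle (as described in the Preliminaries), obtaining an optimal primal solution $x$, which we view as a distribution $\mathcal{S}$ over subsets $T \subseteq S$ with $\Pr_{T\sim\mathcal{S}}[e \in T] \le \kappa$ and $\Ex_{T\sim\mathcal{S}}[v(T)] = \Vmax(\kappa,S)$.

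The distribution $\mathcal{D}[S^*]$ is then defined by the sampling process: draw $T \sim \mathcal{S}$; if $T = \es$ output the zero vector; otherwise invoke \cref{lem:thresholds_informal} (the formal \cref{lem:thresholds_simple}) with budget $B$, parameter $\kappa$, and ground set $T$ to obtain in polynomial time a vector $d^T \in \mathbb{R}^T_+$ with $d^T(T) = B$ and $v(\{e \in T : c_e \le d^T_e\}) \ge v(T) - \Vmax(\kappa, T)$ for all $c \in \Delta^T(\kappa B)$; extend $d^T$ to a vector in $\Delta^S(B)$ by padding with zeros on $S \setminus T$. This is manifestly a polynomial-time-samplable distribution supported on $\Delta^S(B)$. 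Now fix any $c \in \Delta^S(B/16\log\log n)$. The key point is that when $T \sim \mathcal{S}$, we have $\Ex_T[c(T)] = \sum_{e} c_e \Pr[e \in T] \le \kappa \cdot c(S) \le \kappa B/16\log\log n$, so by Markov's inequality, provided $\kappa$ is taken no larger than a suitable constant, we get $c(T) \le \kappa B$ with probability at least, say, $1 - \frac{1}{O(\log\log n)}$ — this is exactly the place where the restriction $c \in \Delta^S(B/16\log\log n)$ is used, absorbing the Markov loss into the $\e = 1/O(\log\log n)$ budget slack discussed in the overview. On the event $\{c(T) \le \kappa B\}$, \cref{lem:thresholds_informal} applies and the payoff is at least $v(T) - \Vmax(\kappa, T)$; on the complementary (low-probability) event the payoff is at least $0$ by nonnegativity. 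Taking expectations over $T$ and using that $v(T) - \Vmax(\kappa,T) \le v(T) \le v(S)$ is bounded, the loss from the bad event is at most $\frac{v(S)}{O(\log\log n)}$.

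It remains to lower-bound $\Ex_{T\sim\mathcal{S}}\bigl[v(T) - \Vmax(\kappa, T)\bigr] = \Vmax(\kappa, S) - \Ex_{T\sim\mathcal{S}}[\Vmax(\kappa, T)]$. The crucial structural claim — which I expect to be the main obstacle, and which I would isolate as a separate lemma — is that $\Ex_{T\sim\mathcal{S}}[\Vmax(\kappa, T)] \le \Vmax(\kappa^2, S)$. The intuition is a composition of bounded-marginal distributions: given the optimal distribution $\mathcal{S}$ over $T$'s with marginals $\le \kappa$, and for each $T$ the optimal distribution $x^T$ over subsets of $T$ with marginals $\le \kappa$, the two-stage process (draw $T\sim\mathcal{S}$, then draw $R \sim x^T$) is a distribution over subsets of $S$ whose marginals satisfy $\Pr[e \in R] = \sum_T \Pr[T]\Pr[e \in R \mid T] \le \kappa \cdot \kappa = \kappa^2$ (bounding $\Pr[e \in R \mid T] \le \kappa \mathbb{1}[e \in T]$), and whose expected value is exactly $\Ex_T[\Vmax(\kappa,T)]$; hence this is a feasible solution to $\LPmax[\kappa^2, S]$ and its value is at most $\Vmax(\kappa^2,S)$. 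Combining, the expected payoff is at least $\Vmax(\kappa,S) - \Vmax(\kappa^2,S) - \frac{v(S)}{O(\log\log n)} \ge \frac{v(S)}{O(\log\log n)}$ by the choice of $\kappa$, and one then just tracks constants to arrive at $\frac{v(S^*)}{16\log\log n}$. The two points needing care are verifying that the demand-oracle-based LP solver genuinely runs in polynomial time (already granted in the Preliminaries) and that the tie-breaking/canonical-solution conventions make $\mathcal{D}[S^*]$ well-defined and polynomially samplable, but the substantive content is the sub-multiplicativity bound $\Ex_{T\sim\mathcal{S}}[\Vmax(\kappa,T)] \le \Vmax(\kappa^2,S)$ together with the invocation of \cite{Dtting2020AnOL}.
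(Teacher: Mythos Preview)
Your proposal is correct and follows essentially the same approach as the paper's proof: choose $\kappa$ via \cref{lem:Dutting}, define $\mathcal{D}[S^*]$ by sampling $T$ from the optimal primal of $\LPmax[\kappa,S^*]$ and outputting the dual-based threshold vector $d^{T,\kappa}$, bound $\Ex_T[\Vmax(\kappa,T)]\le\Vmax(\kappa^2,S^*)$ by the two-stage composition argument, and handle the event $c(T)>\kappa B$ via Markov's inequality. The only cosmetic wrinkle is the clause ``provided $\kappa$ is taken no larger than a suitable constant'': this is unnecessary, since $\kappa$ cancels in the Markov step (from $\Ex[c(T)]\le\kappa B/(16\log\log n)$ you get $\Pr[c(T)>\kappa B]\le 1/(16\log\log n)$ for any $\kappa$), and tracking the constants exactly as the paper does---$1/(8\log\log n)$ from \cref{lem:Dutting} minus $1/(16\log\log n)$ from the bad event---yields the stated $1/(16\log\log n)$.
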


We use the notation $\mathcal{D}[S^*]$ to emphasize that the distribution depends on the set $S^* \subseteq U$.
We begin by proving the following lemma, which is a formal statement of \cref{lem:thresholds_informal}. 
The lemma shows that the dual variables of $\LPmax$ can be used to find a pure strategy $d$ for the $d$-player that achieves a payoff of at least $v(S)-\Vmax(\kappa ,S)$ against all vectors $c \in \mathbb{R}^S_+$ satisfying $c(S)\le \kappa B$.
Fix $S \subseteq U$ and $\kappa \in (0,1)$. 
Fix an optimal solution $(p,\mu )$ to $\dlp$.
We define $d^{S,\kappa }$ to be the vector in $\mathbb{R}^U_+$ given by 
$d^{S,\kappa }_e = p_e\cdot \frac{B}{p(S)}$ for $e \in S$ and $d^{S,\kappa }_e=0$ for $e\in U\setminus S$, 

\begin{lemma}\label{lem:thresholds_simple}
	Let $B \in \mathbb{R}_+$, $\kappa \in (0,1)$, and $S \subseteq U$.
	Then the vector $d:=d^{S,\kappa }$ satisfies the following.
	\begin{itemize}
		\item $d(S) = B$.
		\item For all $c \in \mathbb{R}^S_+$ with $c(S)\le \kappa B$, we have $v(\{e \in S:c_e\le d_e\} ) \ge v(S)-\Vmax(\kappa ,S)$.
	\end{itemize}
\end{lemma}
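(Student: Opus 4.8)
The first bullet is immediate: by construction $d_e = p_e \cdot \frac{B}{p(S)}$ for $e \in S$, so $d(S) = \sum_{e\in S} p_e \cdot \frac{B}{p(S)} = B$. (We should note the edge case $p(S)=0$; then the dual constraint $p(T)+\mu \ge v(T)$ forces $\mu \ge v(S)$, hence $\Vmax(\kappa,S) = \kappa p(S) + \mu \ge v(S)$ by weak/strong duality, and the second bullet is vacuous since $v(S) - \Vmax(\kappa,S) \le 0 \le v(\text{anything})$. So we may assume $p(S)>0$ and the rescaling is well-defined.)

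For the second bullet, fix $c \in \mathbb{R}^S_+$ with $c(S) \le \kappa B$, and let $W := \{e \in S : c_e \le d_e\}$ be the set the $d$-player "wins", and $L := S \setminus W = \{e \in S : c_e > d_e\}$ the complement. The plan is to lower bound $v(W)$ by $v(S) - v(L)$ using subadditivity ($v(S) \le v(W) + v(L)$, since $W \cup L = S$), so it suffices to show $v(L) \le \Vmax(\kappa, S)$. To bound $v(L)$, I will exhibit a feasible solution to $\LPmax[{\kappa,S}]$ whose objective value is at least $v(L)$ — or more precisely, I will use the dual characterization. The key estimate is that every $e \in L$ satisfies $d_e < c_e$, so $p_e \cdot \frac{B}{p(S)} < c_e$, i.e. $p_e < c_e \cdot \frac{p(S)}{B}$. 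Summing over $e \in L$ gives $p(L) < \frac{p(S)}{B}\cdot c(L) \le \frac{p(S)}{B}\cdot c(S) \le \frac{p(S)}{B}\cdot \kappa B = \kappa\, p(S)$.

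Now I invoke the dual constraint $p(L) + \mu \ge v(L)$ (valid since $L \subseteq S$), which gives $v(L) \le p(L) + \mu < \kappa\, p(S) + \mu$. But $\kappa\, p(S) + \mu$ is exactly the objective value of the dual solution $(p,\mu)$ for $\dlp$, which is optimal, so by LP strong duality $\kappa\, p(S) + \mu = \Vmax(\kappa, S)$. Hence $v(L) < \Vmax(\kappa,S)$, and combining with $v(W) \ge v(S) - v(L)$ from subadditivity yields $v(W) > v(S) - \Vmax(\kappa, S)$, which is stronger than needed. The main thing to be careful about is the bookkeeping with the strict versus non-strict inequalities and the $p(S)=0$ degenerate case; there is no real obstacle, since the heart of the argument is just the one-line observation that $e \in L \Rightarrow p_e < c_e\, p(S)/B$ together with the dual feasibility constraint for the set $L$.
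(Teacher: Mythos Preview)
Your proof is correct and follows essentially the same approach as the paper: define the ``lost'' set $L=\{e\in S:c_e>d_e\}$, bound $p(L)$ by $\kappa\,p(S)$ using $d_e<c_e$ on $L$ and $c(S)\le\kappa B$, apply the dual feasibility constraint for $L$ to get $v(L)\le p(L)+\mu\le\kappa\,p(S)+\mu=\Vmax(\kappa,S)$, and finish with subadditivity. Your explicit treatment of the degenerate case $p(S)=0$ is a nice touch that the paper omits.
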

\begin{proof}
	Let $(p,\mu )$ be the optimal solution to $\dlp$ used to obtain $d$.
	The first property holds by construction: $d(S) = \sum_{e \in S}p_e\cdot \frac{B}{p(S)} = B$.
	Now fix $c \in \mathbb{R}^S_+$ such that $c(S) \le \kappa B$.
	Let $T = \{e \in S: c_e > d_e\}$. Note that $v(\{e \in S:c_e\le d_e\} )=v(S\setminus  T)$. We will show that $v(S\setminus T) \ge v(S)- \Vmax(\kappa ,S)$.
	We have that
	\begin{equation*}
		v(T) \le p(T) + \mu 
		= d(T) \cdot \frac{p(S)}{B} + \mu \le c(T) \cdot \frac{p(S)}{B}+ \mu 
		\le \kappa p(S) + \mu = \Vmax(\kappa ,S)
        \end{equation*}
	where the first inequality is because $d_e<c_e$ for all $e\in T$, and the
        second inequality holds since $c(T) \le \kappa B$. 
	Thus,
	$v(S\setminus T) \ge v(S) - v(T) \ge v(S) - \Vmax(\kappa ,S)$. 
\end{proof}

Note that, given $S$ and $\kappa $, one can compute the vector $d^{S,\kappa }$ in polynomial time using demand oracles since the optimal solution to the dual of $\LPmax$ can be computed in polynomial time using demand oracles.

We will construct a strategy for the $d$-player that provably obtains an expected payoff of $\Vmax(\kappa ,S^*)-\Vmax(\kappa ^2,S^*)$, for some $\kappa \in (0,1)$. 
It turns out that there always exists $\kappa \in (0,1)$, computable in polynomial time, such that $\Vmax(\kappa ,S^*) - \Vmax(\kappa ^2,S^*) \ge \frac{1}{8\log \log n}v(S^*)$.
This was shown by \cite{Dtting2020AnOL}; to keep exposition self-contained, we include a
proof at the end of this section.
This is the value of $\kappa $ that we will use when solving the LP.

\begin{lemma}[\cite{Dtting2020AnOL}]\label{lem:Dutting}
	Let $S^* \subseteq U$.
	Suppose $n:=|U|\ge 8$.
	There exists $\kappa \in (0,1)$ such that $\Vmax(\kappa ,S^*) - \Vmax(\kappa ^2,S^*) \ge \frac{1}{8\log \log n}v(S^*)$. Moreover, this value of $\kappa $ can be found in polynomial time using demand oracles.
\end{lemma}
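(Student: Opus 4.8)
The plan is to exploit a telescoping-style argument over a geometrically decreasing sequence of marginal bounds, which is the mechanism behind the $O(\log\log n)$ guarantee of~\cite{Dtting2020AnOL}. Fix $S^*\sse U$; write $m:=|S^*|\le n$ and recall $\Vmax(\kappa,S^*)$ is nondecreasing in $\kappa$, with $\Vmax(1,S^*)=v(S^*)$ (taking $x_{S^*}=1$ is feasible when $\kappa=1$) and $\Vmax(\kappa,S^*)\to 0$ as $\kappa\to 0$ — more quantitatively, any feasible distribution with marginals at most $\kappa$ has $\Ex_{T\sim x}[v(T)]\le \sum_{e\in S^*}\Pr[e\in T]\,v(e)\le \kappa\, m\cdot v(e^*)$ by subadditivity, and since $v(e^*)\le v(S^*)$ this gives $\Vmax(\kappa,S^*)\le \kappa m\, v(S^*)$. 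First I would set up the sequence $\kappa_0=1$ and $\kappa_{i}=\kappa_{i-1}^2=2^{-2^{i}+1}$... more cleanly, consider the values $\kappa_i := n^{-1/2^{\,i}}$ for $i=0,1,\dots,L$ where $L:=\lceil \log_2\log_2 n\rceil$, so that $\kappa_0=1/n$ is small enough that $\Vmax(\kappa_0,S^*)\le \Vmax(1/n,S^*)\le v(S^*)$...

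Let me restructure: the natural sequence has $\kappa_i^2=\kappa_{i-1}$, i.e. $\kappa_i=\kappa_{i-1}^{1/2}$. Take $\kappa_i := n^{-1/2^{\,i}}$, so $\kappa_i^2=n^{-1/2^{\,i-1}}=\kappa_{i-1}$, and $\kappa_L\ge n^{-1/\log_2 n}=1/2$ once $L\ge \log_2\log_2 n$; also $\kappa_0=1/n$. Then I would consider the telescoping sum
\[
\sum_{i=1}^{L}\bigl(\Vmax(\kappa_i,S^*)-\Vmax(\kappa_i^2,S^*)\bigr)
=\sum_{i=1}^{L}\bigl(\Vmax(\kappa_i,S^*)-\Vmax(\kappa_{i-1},S^*)\bigr)
=\Vmax(\kappa_L,S^*)-\Vmax(\kappa_0,S^*).
\]
The key quantitative inputs are: (a) $\Vmax(\kappa_L,S^*)\ge \Vmax(1/2,S^*)\ge \tfrac12 v(S^*)$ — this lower bound on $\Vmax(1/2,\cdot)$ is the one nontrivial ingredient, and I would obtain it exactly as in Theorem~\ref{thm:existence_of_dist}/\eqref{eq:NE}: the minimax/existence statement with $\gamma$ set so that marginals $1/2$ suffice, or more directly, a constant-marginal distribution of value $\Omega(v(S^*))$ always exists for subadditive $v$ (this is essentially the content of the $\frac12 v(S^*)$ max-min bound, instantiated with the all-thresholds-equal strategy); and (b) $\Vmax(\kappa_0,S^*)=\Vmax(1/n,S^*)\le \tfrac1n\cdot n\cdot v(S^*)\cdot$... using the bound above, $\Vmax(1/n,S^*)\le (1/n)\cdot m\cdot v(e^*)\le v(e^*)\le v(S^*)$ — which is too weak. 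So instead I would choose $\kappa_0$ smaller, say $\kappa_0=n^{-2}$, giving $\Vmax(\kappa_0,S^*)\le n^{-2}\cdot n\cdot v(e^*)\le v(S^*)/n\le v(S^*)/8$, and extend the sequence to length $L'=\lceil\log_2(2\log_2 n)\rceil=O(\log\log n)$; the telescoping sum is then $\ge \tfrac12 v(S^*)-\tfrac18 v(S^*)=\tfrac38 v(S^*)$, spread over $L'$ terms, so some index $i$ has $\Vmax(\kappa_i,S^*)-\Vmax(\kappa_i^2,S^*)\ge \tfrac{3}{8L'}v(S^*)\ge \tfrac{1}{8\log\log n}v(S^*)$ for the constants as stated (absorbing the $O(1)$ slack into the $\lceil\cdot\rceil$ and the precise base of the logarithm — I would pick the sequence so the count of terms is exactly $\log\log n$ in the relevant regime, or simply verify the claimed constant holds for $n\ge 8$).

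For the polynomial-time computability: there are only $L'=O(\log n)$ candidate values $\kappa_1,\dots,\kappa_{L'}$ in the sequence, and for each one we can compute $\Vmax(\kappa_i,S^*)$ and $\Vmax(\kappa_i^2,S^*)$ in polynomial time by solving $\LPmax$ via the Ellipsoid method with the demand-oracle-based separation oracle for $\dlp$ described in the Preliminaries. We then simply scan over $i=1,\dots,L'$ and return the first $\kappa_i$ achieving $\Vmax(\kappa_i,S^*)-\Vmax(\kappa_i^2,S^*)\ge \tfrac{1}{8\log\log n}v(S^*)$; the telescoping argument guarantees one exists.

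\medskip
\noindent\textbf{Main obstacle.} The crux is establishing the constant lower bound $\Vmax(1/2,S^*)\ge \Omega(v(S^*))$ (item (a)), since everything else is elementary bookkeeping with the monotonicity of $\Vmax(\cdot,S^*)$ and the crude upper bound $\Vmax(\kappa,S^*)\le \kappa\,|S^*|\,v(e^*)$. This lower bound is exactly where subadditivity is used in a nontrivial way; the cleanest route is to invoke Theorem~\ref{thm:existence_of_dist} (or rather the minimax formulation \eqref{eq:NE} specialized appropriately) to produce a distribution over subsets of $S^*$ with all marginals $\le 1/2$ and expected value $\ge \tfrac12 v(S^*)$, which is feasible for $\LPmax[{1/2,S^*}]$ and hence witnesses $\Vmax(1/2,S^*)\ge\tfrac12 v(S^*)$. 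I would double-check that the constants chosen in the sequence and in $L'$ make the final bound come out to at least $\tfrac{1}{8\log\log n}v(S^*)$ for all $n\ge 8$, adjusting $\kappa_0$ and the geometric ratio (e.g. using ratio $1/2$ in the exponent rather than squaring at every step is already forced) if the arithmetic is slightly off at small $n$.
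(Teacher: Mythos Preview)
Your telescoping approach over a sequence with $\kappa_i^2=\kappa_{i-1}$ is exactly what the paper does (the paper uses $Z=\{2^{-2^i}:1\le i\le \log\log n\}$, which telescopes $\Vmax(1/4,S^*)-\Vmax(1/n^2,S^*)$), and your upper bound $\Vmax(\kappa,S^*)\le \kappa\,|S^*|\,v(S^*)$ is equivalent to the paper's dual-based bound.

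However, you have misidentified the ``main obstacle.'' The lower bound $\Vmax(1/2,S^*)\ge \tfrac12 v(S^*)$ is \emph{trivial}: the distribution that outputs $S^*$ with probability $\tfrac12$ and $\emptyset$ otherwise is feasible for $\LPmax[{1/2,S^*}]$ with objective value $\tfrac12 v(S^*)$. (The paper uses exactly this argument with $\kappa=\tfrac14$, setting $x_{S^*}=\tfrac14$.) No subadditivity, minimax, or Theorem~\ref{thm:existence_of_dist} is needed here. In fact, your proposed route via Theorem~\ref{thm:existence_of_dist} does not work as stated: that theorem produces a distribution over \emph{threshold vectors} $d\in K$, not a distribution over \emph{subsets} of $S^*$ with bounded marginals, so it does not directly witness feasibility for $\LPmax[{1/2,S^*}]$.

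Once you replace that step with the one-line observation above, your argument becomes essentially identical to the paper's, modulo the specific choice of sequence endpoints and the resulting constant bookkeeping (which you acknowledge you have not pinned down for $n\ge 8$; the paper's choice of $Z$ makes this clean).
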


Now we are ready to prove \cref{thm:good_dist}.

\begin{proof}[Proof of \cref{thm:good_dist}]
	Let $\kappa \in (0,1)$ be such that $\Vmax(\kappa ,S^*) - \Vmax(\kappa ^2,S^*) \ge
        \frac{v(S^*)}{8\log \log n}$ as guaranteed by \cref{lem:Dutting}.
	The distribution $\mathcal{D}[S^*]$ is obtained by sampling a set $S\sim x$ where $x$ is the optimal solution for $\LPmax[{\kappa,S^*}]$ and then taking the vector $d^{S,\kappa }$.
	By \cref{lem:thresholds_simple}, we have that $\mathcal{D}[S^*]$ is a distribution of vectors in $\Delta^{S^*}(B)$.
	Additionally, note that $x$ can be computed in polynomial time, and also has polynomial sized support, since its the output of the Ellipsoid algorithm.
	Moreover, given $S$ and $\kappa $, the vector $d^{S,\kappa }$ can also be computed in polynomial using demand oracles.
	It follows that a complete description of $\mathcal{D}[S^*]$ (that is, a list of vectors in the support of $\mathcal{D}[S^*]$ and their probabilities) can be computed in polynomial time.

	Now we will show that sampling a vector from $\mathcal{D}[S^*]$ provides a good payoff for the $d$-player.
	We will first show that $\Ex_S[v(S) - \Vmax(\kappa ,S)]\ge \frac{v(S^*)}{O(\log \log n)}$.
	Note that $\Ex_S[v(S)] = \Vmax(\kappa ,S^*)$.
	We claim that $\Ex_S[\Vmax(\kappa ,S)] \le  \Vmax(\kappa ^2,S^*)$.
	Consider the distribution over subsets $T$ of $S^*$, where we first sample a set $S$ from $(x_S)_{S \subseteq S^*}$ where $x$ is the optimal solution to $\LPmax[{\kappa ,S^*}]$ and then sample a set $T$ from $(y_T)_{T \subseteq S}$ where $y$ is the optimal solution to $\LPmax$.
	For any element $e \in S^*$, we have that $\Pr[e \in S] \le \kappa $ and $\Pr[e \in T  \mid  e \in S] \le \kappa $.
	Thus $\Pr[e \in T]\le \kappa ^2$.
	Hence the distribution over $T$ has marginals at most $\kappa ^2$, implying that $\Ex_{T}[v(T)] \le \Vmax (\kappa ^2,S^*)$.
	Finally, observe that $\Ex_{T}[v(T)] = \Ex_{S}[\Vmax(\kappa ,S)]$ by construction.
	Thus, we have that $\Ex_S[\Vmax(\kappa ,S)] \le  \Vmax(\kappa ^2,S^*)$.
	This shows that $\Ex_S[v(S) - \Vmax(\kappa ,S)] \ge \Vmax(\kappa ,S^*) - \Vmax(\kappa ^2,S^*) \ge \frac{v(S^*)}{8\log \log n}$, by our choice of $\kappa $.

	Now, we are almost done. We would like to invoke \cref{lem:thresholds_simple} to show that $\Ex[v(\{e \in S^* : c_e \le d_e\} )] \ge \Ex[v(S)-\Vmax(\kappa ,S)] \ge \frac{v(S^*)}{8\log \log n}$. However, this only works when $c(S) \le \kappa B$. We need to show that the cost vector satisfies this with high probability.

	Fix $c \in \mathbb{R}^{S^*}_+$ satisfying $c(S^*)\le \frac{B}{16 \log \log n}$.
	Recall that $S$ is sampled from the distribution $(x_S)_{S \subseteq S^*}$.
	Since this distribution has marginals at most $\kappa $, we have that $\Ex[c(S)] \le \kappa c(S^*) = \frac{\kappa B}{16 \log \log n}$.
	By Markov's inequality, with probability at least $1-\frac{1}{16 \log \log n}$, we have that $c(S) \le  (16\log \log n)\Ex[c(S)] \le \kappa B$.
	Let $\Omega $ be the event that $c(S) \le \kappa B$, and let $\overline{\Omega  }$ be the event that $c(S)>\kappa B$.
	We have,
	\begin{align*}
		\frac{1}{8\log \log n}v(S^*)&\le \Ex[v(S) - \Vmax(\kappa ,S)]\\
		&= \Ex[v(S) -\Vmax(\kappa ,S)  \mid \Omega ] \Pr[\Omega ] +  \Ex[v(S) - \Vmax(\kappa ,S)  \mid \overline{\Omega } ] \Pr[\overline{\Omega  } ] \\
		&\le \Ex[v(S) - \Vmax(\kappa ,S)  \mid \Omega ]\Pr[\Omega ] + \frac{1}{16 \log \log n}v(S^*)
	\end{align*}
	where we used that fact that $\Pr[\overline{\Omega  }]\le \frac{1}{16 \log \log n}$ and $\Ex[v(S)-\Vmax(\kappa ,S) \mid \overline{\Omega  }]\le v(S^*)$.
	This implies that
	\[\Ex[v(S)-\Vmax(\kappa ,S) \mid \Omega  ]\Pr[\Omega ] \ge \frac{1}{16\log \log n}v(S^*).\]
	Let $Q = \{e \in S : c_e \le d_e\}$. Under the event $\Omega   $, by \cref{lem:thresholds_simple}, we have that $v(Q) \ge v(S) - \Vmax(\kappa ,S)$.
	Thus, \[\Ex_S[v(Q)  \mid \Omega  ]\Pr[\Omega ] \ge \Ex[v(S) - \Vmax(\kappa ,S)  \mid \Omega   ]\Pr[\Omega ] \ge \frac{v(S^*)}{16 \log \log n}.\]
	Hence, the payoff of the $d$-player, which is $v(Q)$, has expected value at least $\frac{v(S^*)}{16 \log \log n}$.
	This concludes the proof.
\end{proof}

\begin{proofof}{Lemma~\ref{lem:Dutting}}
	Let $Z = \{2^{-2^i} : 1\le i\le \log \log n\}$. We will find $\kappa \in Z$ satisfying $\Vmax(\kappa ,S^*) - \Vmax(\kappa ^2,S^*) \ge \frac{1}{O(\log \log n)}v(S^*)$.
	By a telescoping argument, we have,
	\begin{align*}
		\sum_{z \in Z}\Bigl(\Vmax(z,S^*) - \Vmax(z^2,S^*)\Bigr) = \Vmax\Bigl(\tfrac{1}{4},S^*\Bigr) - \Vmax\Bigl(\tfrac{1}{n^2},S^*\Bigr) 
	\end{align*}
	Setting $x_{S^*} = \frac{1}{4}$ and $x_S=0$ for all other sets $S\subset S^*$
        gives rise to a feasible solution for $\LPmax[{\frac{1}{4},S^*}]$ with objective
        $\frac{1}{4}v(S^*)$. Hence $\Vmax\left(\frac{1}{4},S^*\right) \ge
        \frac{1}{4}v(S^*)$. 
	Setting $p_e = v(S^*)$ for all $e \in S^*$ and $\mu = 0$ gives a feasible solution
        to $\dlp[{\frac{1}{n^2},S^*}]$ with objective value $\frac{1}{n^2}p(S^*)+
        \mu=\frac{1}{n}v(S^*)$. This shows that 
        $\Vmax\left(\frac{1}{n^2},S^*\right) \le \frac{1}{n}v(S^*)$. 
	Hence we have 
	\begin{align*}
		\sum_{z \in Z}\Bigl(\Vmax(z,S^*) - \Vmax(z^2,S^*)\Bigr) \ge \left(\frac{1}{4}-\frac{1}{n}\right)v(S^*) \ge \frac{1}{8}v(S^*)
	\end{align*}
	for $n\ge 8$.
	Since $|Z|=\log \log n$, there exists some $\kappa  \in Z$ for which $\Vmax(\kappa ,S^*) - \Vmax(\kappa ^2,S^*) \ge \frac{1}{8\log \log n}v(S^*)$.

	To find $\kappa $ in polynomial time using demand oracles, we may simply compute $\Vmax(z,S^*)-\Vmax(z^2,S^*)$ for each value of $z \in Z$ and take the maximum.
	This can be done in polynomial time as the LP can be solved in polynomial time using demand oracles.
\end{proofof}

\section{The budget feasible mechanism} \label{bfmech}

We now use the result in \cref{sec:good_dist} to design a randomized budget feasible mechanism for
subadditive valuations that achieves an approximation factor of $O(\log \log n)$, 
thereby proving Theorem~\ref{mainthm}.
As mentioned in \cref{sec:tech_overview}, we first randomly partition the set of sellers
into two groups $U_1,U_2$, and get an estimate of $\OPT$ from the sellers in $U_1$ and
discard them. 
We then work over the sellers in $U_2$. 
We compute a demand set over $U_2$, with appropriately chosen prices.
This results in a set $S^* \subseteq U_2$ that has good value compared to $\OPT$.
When $c(S^*) \le \frac{B}{O(\log \log n)}$, we can utilize \cref{thm:good_dist} to prove a
good approximation ratio by sampling a vector $d \sim \mathcal{D}[S^*]$ and return $R =
\{e \in S^*:c_e\le d_e\}$. 
However, we also need to deal with the case when $c(S^*)> \frac{B}{O(\log \log n)}$.
In this scenario, we will show that one can construct a single vector $d'$ that provides
suitable thresholds. 
We show that a suitable choice for $d'$ is to set 
$d'_e$ to be a suitable scaling of $q_e$, where $q\in\mathbb{R}^{S^*}_+$ is an
optimal solution to the LP:
$\max\ q(S^*)$ subject to $q(S) \le v(S^*)-v(S^*\setminus  S)$ for all $S \subseteq S^*$. 
We select the set of elements $A$ for which $c_e \le d'_e$, and \emph{prune} this set to
construct a subset $R' \subseteq A$ satisfying $\Omega (\OPT)-v(e^*)\le v(R') \le
O(\OPT)$.  
We show that constructing $R'$ this way admits budget-feasible threshold payments.
Returning either $R'$ or $e^*$ then gives a good approximation, since if $v(e^*)$ is small
then $v(R')\ge \Omega (\OPT)-v(e^*)$ would be large. 
One caveat here is that we cannot really branch on whether 
$c(S^*)\le \frac{B}{O(\log \log n)}$ as this causes problems with truthfulness. 
So instead, we return one of $R$, $R'$, or $e^*$, each chosen with a suitable
probability. 
The algorithm is described in detail below. 
We show that Algorithm~\ref{bfmech-alg} is a distribution over monotone algorithms, and
so the corresponding payments can be obtained using Theorem~\ref{lem:myersons}. 

\begin{algorithm}
	\caption{Budget feasible mechanism for subadditive valuations}\label{alg:mechanism}\label{bfmech-alg}

	\Input{A valuation function $v:2^U\to \mathbb{R}_+$, budget $B \in \mathbb{R}_+$, and reported costs $\{c_e\}_{e\in U}$}

	\Output{A subset $S \subseteq U$}

        If $n:=|U|<8$, run the exponential-time mechanism from~\cite{NeogiPS24} for
        subadditive valuations. Otherwise, proceed as below.

	Construct a partition $U_1,U_2$ of $U$, where each item $e \in U$ is put into $U_1$ or $U_2$ with probability $\frac{1}{2}$.

	Let $V_1$ be a $(2+\e)$-approximation of $V_1^*$, the optimum value over $U_2$,
        obtained using \cref{lem:dem_oracle_approx}. 
        Let $\al=\frac{1}{4(2+\e)+1}$ and $\beta=1-\al$. \label{v1estim}

	Let $S^* \gets \argmax_{S \subseteq U_2}\left\{v(S)- \frac{V_1}{2B}\cdot c(S)\right\}$. 
        \label{demdset}

	Use \cref{thm:good_dist} to get a distribution $\mathcal{D}[S^*]$ over vectors $d
        \in \Delta^{S^*}(B)$ such that $\Ex_{d \sim \mathcal{D}[S^*]}[v(\{e \in S^* :
          c_e\le d_e\} )] \ge \frac{v(S^*)}{16 \log  \log n}$ for all $c \in \Delta
        ^{S^*}\left(\frac{B}{16 \log \log n}\right)$. \label{gooddist}

	Sample $d \sim \mathcal{D}[S^*]$.

	Let $R \gets \{e \in S^* : c_e \le d_e\}$. \label{rset}

	Compute an optimal solution $q\in\R^{S^*}$ to the following LP: 
        \begin{equation}
          \max \quad  q(S^*) \qquad \text{s.t.} \qquad 
          q(S)\le v(S^*)-v(S^*\setminus S) \quad \forall S \subseteq S^*, \qquad q\geq 0. 
          \tag{P} \label{xoslp}
        \end{equation}
        \label{xosthresh} 

	Let $A \gets \{e \in S^* : c_e \le  q_e\cdot\frac{4B}{V_1}\}$. \label{threshprune}

	Let $R'$ be a maximal prefix of $A$ satisfying $q(R') \le \frac{1}{4}V_1$. \label{rpset}
	
	\Return $R$ with probability $0.8\al$, $R'$ with probability $0.8\beta$, and $e^*$
        with probability $0.2$. 
\end{algorithm}

For step~\ref{xosthresh}, it is important that the vector $q$ is computed with a consistent tie-breaking rule. 
That is, $q$ is fixed once $S^*$ and $v$ are fixed.

We assume that $n\geq 8$, as otherwise, the guarantees follow from~\cite{NeogiPS24}.
Theorem~\ref{mainthm} follows from Lemmas~\ref{polytime}, which shows polynomial running
time, Lemma~\ref{approx}, which proves the approximation guarantee, and
Lemma~\ref{budgetfeas}, which proves truthfulness and budget feasibility.

\begin{lemma} \label{polytime}
	\cref{alg:mechanism} can be implemented in polynomial time using at most polynomial many demand oracle queries.
\end{lemma}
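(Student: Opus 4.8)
The plan is to go through \cref{alg:mechanism} line by line and argue that each step runs in polynomial time and issues at most polynomially many demand-oracle queries. First I would dispatch the base case: if $n<8$ we invoke the exponential-time mechanism of~\cite{NeogiPS24}, but since $n$ is then a constant, this takes constant time. So assume $n\geq 8$ for the rest.

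Next I would handle the steps in order. The random partition of $U$ into $U_1,U_2$ is clearly polynomial-time. For step~\ref{v1estim}, computing a $(2+\e)$-approximation $V_1$ to $V_1^*$ uses \cref{lem:dem_oracle_approx}, which by hypothesis runs in polynomial time using demand oracles (with $\e$ a fixed constant), and setting $\al,\beta$ is arithmetic. Step~\ref{demdset} computes $S^*$ as a single demand-set query on $U_2$ with prices $p_e=\frac{V_1}{2B}c_e$; that is exactly one demand-oracle call (here I would note, as the paper does in the preliminaries, that the oracle uses a consistent tie-breaking rule so $S^*$ is well-defined). Step~\ref{gooddist} invokes \cref{thm:good_dist}, which explicitly guarantees that the distribution $\mathcal{D}[S^*]$ — a list of vectors in its support together with their probabilities — can be computed in polynomial time using demand oracles; in particular its support has polynomial size, so sampling $d\sim\mathcal{D}[S^*]$ and forming $R=\{e\in S^*:c_e\le d_e\}$ in step~\ref{rset} is polynomial. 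Step~\ref{xosthresh} requires solving the LP~\eqref{xoslp}; although it has exponentially many constraints, its dual (or the constraint system directly) admits a separation oracle that is precisely a demand-oracle computation — given a candidate $q\geq 0$, checking $q(S)\le v(S^*)-v(S^*\setminus S)$ for all $S\subseteq S^*$ is equivalent to checking $\max_{S\subseteq S^*}\bigl(q(S)+v(S^*\setminus S)-v(S^*)\bigr)\le 0$, i.e.\ $\max_{T\subseteq S^*}\bigl(v(T)-q(S^*\setminus T)\bigr)\le v(S^*)$, which is a demand query on $S^*$ with prices read off from $q$ (again using the fixed tie-breaking rule to get the canonical $q$). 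So the Ellipsoid method solves~\eqref{xoslp} in polynomial time with polynomially many demand queries. Finally, steps~\ref{threshprune}, \ref{rpset} and the concluding return only scan $S^*$ and compute prefix sums, which is polynomial, and sampling among $R$, $R'$, $e^*$ with the stated probabilities is trivial.

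I expect the main obstacle — really the only non-routine point — to be justifying that the LP~\eqref{xoslp} in step~\ref{xosthresh} can be solved in polynomial time despite having a constraint for every $S\subseteq S^*$. The key is to exhibit the separation oracle as a demand-oracle call as sketched above, and to recall (as the paper does for $\LPmax$) that the Ellipsoid method with such a separation oracle returns an optimal solution whose encoding length, and the number of oracle calls, are polynomially bounded; one should also observe that the LP is bounded (each $q_e\le v(S^*)\le v(U)$, taking the singleton constraints) so an optimum exists. A secondary point worth a sentence is that all the objects passed between steps — $V_1$, the support of $\mathcal{D}[S^*]$, the vector $q$ — have polynomial bit-complexity, so the arithmetic throughout stays polynomial. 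With these observations in place, summing the per-step bounds gives the claimed overall polynomial running time and polynomial demand-query count, completing the proof of \cref{polytime}.
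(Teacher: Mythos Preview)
Your proposal is correct and follows essentially the same approach as the paper: the only non-routine step is~\ref{xosthresh}, and both you and the paper solve \eqref{xoslp} via the Ellipsoid method with a demand-oracle-based separation oracle. One small algebraic slip to fix: in your reduction, the expression should be $\max_{T\subseteq S^*}\bigl(v(T)+q(S^*\setminus T)\bigr)\le v(S^*)$ (with a $+$, not $-$); writing $q(S^*\setminus T)=q(S^*)-q(T)$ then gives the paper's clean equivalent form $\max_{T\subseteq S^*}\bigl(v(T)-q(T)\bigr)\le v(S^*)-q(S^*)$, which is precisely a demand-oracle query with prices $q$ on $S^*$.
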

\begin{proof}
	Step~\ref{v1estim} and step~\ref{gooddist} can be implemented using demand oracles by \cref{lem:dem_oracle_approx} and \cref{thm:good_dist} respectively.
	We show how to implement step~\ref{xosthresh}.
	To solve \eqref{xoslp} in step~\ref{xosthresh}, it suffices to give a separation oracle
        for the constraints of the LP. One can then use the Ellipsoid method to solve it. 
        Let $q \in \mathbb{R}_+^{S^*}$. Observe that
        \begin{equation*}
          \begin{split}
            q(S) \le v(S^*)-v(S^*\setminus  S) \quad \forall S\sse S^* & \iff 
            q(S^*)-q(S^*\sm S)\leq v(S^*)-v(S^*\sm S)\quad \forall S\sse S^* \\
            & \iff v(T)-q(T)\leq v(S^*)-q(S^*)\quad \forall T\sse S^*.
          \end{split}
          \end{equation*}
        The final condition above can be easily checked using a demand-oracle query with
        prices $q$.%
        \footnote{More precisely, we set the price of an element $e\in S^*$ to be $q_e$,
          and set the prices of elements outside of $S^*$ to be $\infty$, or some very
          large value. Then the demand-oracle query effectively amounts to a demand-oracle query
          over the ground set $S^*$.}
        If the oracle returns some $T\sse S^*$ with $v(T)-q(T)>v(S^*)-q(S^*)$, then the
        LP constraint for $S^*\sm T$ is violated, and otherwise $q$ is feasible.
\end{proof}

Before analyzing the approximation ratio, we prove a lemma regarding the cost of the demand set.

\begin{lemma}\label{lem:dem_bounds}
	For every $S \subseteq S^*$, we have that $ \frac{V_1}{2B} c(S)\le v(S^*)-v(S^*\setminus S) \le v(S)$.
\end{lemma}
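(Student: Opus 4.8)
The plan is to exploit the defining property of the demand set $S^*$ chosen in step~\ref{demdset} together with subadditivity of $v$. Recall that $S^* \in \argmax_{S \subseteq U_2}\{v(S) - \frac{V_1}{2B} c(S)\}$. The right-hand inequality $v(S^*) - v(S^* \setminus S) \le v(S)$ is immediate from subadditivity: since $S^* \subseteq (S^* \setminus S) \cup S$ and $v$ is monotone and subadditive, $v(S^*) \le v(S^* \setminus S) + v(S)$, which rearranges to the claim. (Strictly, monotonicity gives $v(S^*) \le v((S^*\setminus S)\cup S)$ and then subadditivity splits the union.)

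For the left-hand inequality, fix $S \subseteq S^*$ and compare the demand-objective value of $S^*$ against that of the feasible competitor $S^* \setminus S$. Optimality of $S^*$ gives
\[
v(S^*) - \frac{V_1}{2B} c(S^*) \;\ge\; v(S^* \setminus S) - \frac{V_1}{2B} c(S^* \setminus S).
\]
Rearranging, $v(S^*) - v(S^* \setminus S) \ge \frac{V_1}{2B}\bigl(c(S^*) - c(S^* \setminus S)\bigr) = \frac{V_1}{2B} c(S)$, where the last equality uses that $S \subseteq S^*$ so $c(S^*) = c(S^* \setminus S) + c(S)$. This is exactly the desired lower bound. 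Chaining the two inequalities yields $\frac{V_1}{2B} c(S) \le v(S^*) - v(S^* \setminus S) \le v(S)$.

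There is essentially no obstacle here; the only point requiring a moment of care is that the comparison set $S^* \setminus S$ is a legitimate competitor in the demand problem (it is a subset of $U_2$, since $S^* \subseteq U_2$), so the optimality inequality applies to it. One should also note the tie-breaking subtlety: even if $v(S^*) - \frac{V_1}{2B}c(S^*)$ is attained by several sets, $S^*$ is at least as good as $S^*\setminus S$, so the weak inequality above holds regardless of how ties are resolved.
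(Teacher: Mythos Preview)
Your proof is correct and follows essentially the same approach as the paper: the left inequality comes from comparing $S^*$ against $S^*\setminus S$ in the demand objective, and the right inequality from subadditivity. One small simplification: since $S\subseteq S^*$ you have $S^* = (S^*\setminus S)\cup S$ exactly, so the appeal to monotonicity is unnecessary.
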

\begin{proof}
	Let $S \subseteq S^*$. By construction of $S^*$, we have that $v(S^*)-\frac{V_1}{2B} c(S^*) \ge v(S^*\setminus S) - \frac{V_1}{2B} c(S^*\setminus S)$.
	Rearranging, we get that $\frac{V_1}{2B} c(S) = \frac{V_1}{2B} c(S^*) - \frac{V_1}{2B} c(S^*\setminus S) \le v(S^*)-v(S^*\setminus S)$, showing the first inequality.
	The second inequality follows by subadditivity of $v$.
\end{proof}

Now we are ready to prove the approximation ratio.

\begin{lemma} \label{approx}
  \cref{alg:mechanism} returns a solution $T$ such that $\Ex[v(T)] \ge \frac{1}{O(\log \log n)}\cdot\OPT$.
\end{lemma}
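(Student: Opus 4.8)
The plan is to condition on the good event from Lemma~\ref{lem:partitioning}, namely $\mathcal{E}$: $V_2^* \ge \OPT/2$ and $V_2^* \ge V_1^* \ge (\OPT - v(e^*))/4$, which holds with probability $\ge 1/4$. First I would argue that on $\mathcal{E}$, the estimate $V_1$ is a good proxy for $\OPT$: since $V_1$ is a $(2+\e)$-approximation of $V_1^*$ we get $V_1^*/(2+\e) \le V_1 \le V_1^*$, hence $V_1 \le V_2^* \le \OPT$ and $V_1 \ge V_1^*/(2+\e) \ge (\OPT - v(e^*))/(4(2+\e))$. Next I would lower-bound $v(S^*)$. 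By definition $S^*$ maximizes $v(S) - \frac{V_1}{2B}c(S)$ over $S \subseteq U_2$; comparing against the optimal set $S_2^*$ achieving $V_2^*$ (which has $c(S_2^*) \le B$ and lies in $U_2$), we get $v(S^*) - \frac{V_1}{2B}c(S^*) \ge v(S_2^*) - \frac{V_1}{2B}c(S_2^*) \ge V_2^* - \frac{V_1}{2} \ge \OPT/2 - \OPT/2 \ge 0$, and more usefully $v(S^*) \ge V_2^* - \frac{V_1}{2} \ge \OPT/2 - \OPT/2$; I would instead use $v(S^*) \ge v(S^*) - \frac{V_1}{2B}c(S^*) + \frac{V_1}{2B}c(S^*) \ge V_2^* - \frac{V_1}{2}$ together with $V_1 \le V_2^*$ to conclude $v(S^*) \ge V_2^*/2 \ge \OPT/4$. (I will need to track constants carefully here, but the point is $v(S^*) = \Omega(\OPT)$.)

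Then I would split into two cases according to whether $c(S^*) \le \frac{B}{16\log\log n}$. \textbf{Case 1: $c(S^*) \le \frac{B}{16\log\log n}$.} Here $c$ restricted to $S^*$ lies in $\Delta^{S^*}(B/16\log\log n)$, so Theorem~\ref{thm:good_dist} gives $\Ex_{d}[v(R)] = \Ex_d[v(\{e \in S^*: c_e \le d_e\})] \ge \frac{v(S^*)}{16\log\log n} = \Omega(\OPT/\log\log n)$; since $R$ is returned with probability $0.8\al = \Omega(1)$, this contributes $\Omega(\OPT/\log\log n)$ to $\Ex[v(T)]$. \textbf{Case 2: $c(S^*) > \frac{B}{16\log\log n}$.} Here I work with $R'$. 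Using Lemma~\ref{lem:dem_bounds}, $q = \mathbf{1}$ scaled appropriately is not feasible, but the constraint $q(S) \le v(S^*) - v(S^*\setminus S)$ together with $\frac{V_1}{2B}c(S) \le v(S^*)-v(S^*\setminus S)$ shows that $q' := \frac{V_1}{2B}c|_{S^*}$ is feasible for \eqref{xoslp}, so $q(S^*) \ge q'(S^*) = \frac{V_1}{2B}c(S^*) > \frac{V_1}{2B}\cdot\frac{B}{16\log\log n} = \frac{V_1}{32\log\log n}$. Now for every $e \in A$ we have $c_e \le q_e\cdot\frac{4B}{V_1}$, so $c(R') \le \frac{4B}{V_1}q(R')$; if $R' = A$ then $q(R') = q(A)$, and otherwise maximality of the prefix means adding the next element $e'$ of $A$ breaks $q(R') + q_{e'} > \frac14 V_1$, so $q(R') > \frac14 V_1 - q_{e'}$. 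Using the preprocessing/normalization and the bound on individual $q_e$ (which I'd derive from the constraint with singleton $S$: $q_e \le v(S^*) - v(S^* - e) \le v(e)$, hence $q_{e'} \le v(e^*)$), I get $q(R') \ge \min\{q(A), \tfrac14 V_1\} - v(e^*) \ge \frac{V_1}{32\log\log n} - v(e^*)$ in the worst subcase, but actually I want $q(R') = \Omega(V_1)$ when $v(e^*)$ is small; I'd conclude $q(R') \ge \frac14 V_1 - v(e^*)$ when $R' \neq A$ and $q(R') \ge \frac{V_1}{32\log\log n}$ when $R' = A$. In both subcases $v(R') \ge q(R')$ is NOT immediate — rather I use subadditivity-type reasoning: the XOS-closeness says $v(R') \ge q(R') $ fails in general, so instead I'd show $v(R') \ge v(S^*) - v(S^* \setminus R')$ is the wrong direction too; the right statement is that $q(R') \le v(S^*) - v(S^*\setminus R') \le v(R')$ (second inequality by subadditivity), giving $v(R') \ge q(R') = \Omega(V_1 - v(e^*)\cdot\text{poly})$...

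Let me restate the Case 2 conclusion cleanly: I claim $v(R') \ge q(R') \ge \frac{1}{O(\log\log n)}V_1 - v(e^*)$ (combining both prefix subcases and using $q_{e'} \le v(e^*)$), hence $v(R') \ge \frac{1}{O(\log\log n)}(\OPT - v(e^*)) - v(e^*)$ on $\mathcal{E}$. Then since $R'$ is returned with probability $0.8\beta = \Omega(1)$ and $e^*$ with probability $0.2$, we have $\Ex[v(T)] \ge 0.8\beta\, v(R') + 0.2\, v(e^*) \ge \Omega\!\left(\frac{\OPT - v(e^*)}{\log\log n}\right) + \Omega(v(e^*)) \ge \Omega(\OPT/\log\log n)$, where the last step is a case analysis on whether $v(e^*) \ge \OPT/2$ or not. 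Combining the two cases (which are exhaustive and handled by the single randomized output), and multiplying by $\Pr[\mathcal{E}] \ge 1/4$, yields $\Ex[v(T)] \ge \frac{1}{O(\log\log n)}\OPT$. \textbf{The main obstacle} is Case 2: establishing that the pruned prefix $R'$ satisfies both $q(R') = \Omega(V_1) - O(v(e^*))$ and that this transfers to a value lower bound $v(R') \ge q(R')$ via the feasibility constraint of \eqref{xoslp} combined with subadditivity, while simultaneously ensuring the prefix/threshold structure is what makes $R'$ monotone (deferred to Lemma~\ref{budgetfeas}). Getting the interaction between the $e^*$ fallback term and the $-v(e^*)$ loss in $v(R')$ right, so that the final bound is clean $\Omega(\OPT/\log\log n)$, requires the explicit constants $\al = \frac{1}{4(2+\e)+1}$, $\beta = 1-\al$ chosen in step~\ref{v1estim}.
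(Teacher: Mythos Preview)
Your overall structure matches the paper's proof: condition on the good event from Lemma~\ref{lem:partitioning}, show $v(S^*)\ge\OPT/4$, split on whether $c(S^*)\le B/(16\log\log n)$, and in Case~2 use feasibility of $\frac{V_1}{2B}c$ for \eqref{xoslp} together with the prefix construction and the $e^*$ fallback. The transfer $v(R')\ge q(R')$ via the LP constraint and subadditivity is also exactly right.

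There is, however, a genuine gap in your Case~2 argument when $R'=A$. You correctly derive $q(S^*)\ge\frac{V_1}{2B}c(S^*)>\frac{V_1}{32\log\log n}$, but then assert $q(R')=q(A)\ge\frac{V_1}{32\log\log n}$ without justification. Since $A\subseteq S^*$, a lower bound on $q(S^*)$ does not by itself give a lower bound on $q(A)$; the optimal $q$ could in principle put most of its mass on $S^*\setminus A$. The missing step is to exploit the \emph{threshold definition} of $A$: for every $e\in S^*\setminus A$ we have $c_e>q_e\cdot\frac{4B}{V_1}$, i.e.\ $q_e<\frac{V_1}{4B}c_e$, so $q(S^*\setminus A)<\frac{V_1}{4B}c(S^*\setminus A)\le\frac{V_1}{4B}c(S^*)$. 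Combining this with $q(S^*)\ge\frac{V_1}{2B}c(S^*)$ gives
\[
q(A)=q(S^*)-q(S^*\setminus A)\ \ge\ \frac{V_1}{2B}c(S^*)-\frac{V_1}{4B}c(S^*)\ =\ \frac{V_1}{4B}c(S^*)\ >\ \frac{V_1}{64\log\log n},
\]
which is what the paper does. Without this step, the $R'=A$ subcase is unproven and the whole Case~2 bound collapses. Once you insert this calculation, the rest of your plan (including the $\al,\beta$ balancing against $v(e^*)$) goes through essentially as in the paper.
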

\begin{proof}
	Recall that $V_i^* = \max\{v(S) : c(S)\le B, S \subseteq U_i\}$ for each 
        $i \in \{1,2\}$, and $v(e^*)=\max_{e\in U}v(e)$. 
	Let $\Gamma $ be the event that $V_2^*\ge V_1^* \ge \frac{\OPT-v(e^*)}{4}$ and $V_2^* \ge \frac{\OPT}{2}$.
	By \cref{lem:partitioning}, $\Gamma $ holds with probability $\ge \frac{1}{4}$.
	Assume that the event $\Gamma $ happens.
        So we have $V^*_2\geq\frac{\OPT}{2}$ and
        $V_1\geq\frac{V^*_1}{2+\e}\geq\frac{\OPT-v(e^*)}{4(2+\e)}=\frac{\al}{\beta}\cdot\bigl(\OPT-v(e^*)\bigr)$.

	First, let us show that $v(S^*) \ge \frac{\OPT}{4}$.
	To see this, let $O^*_2 = \argmax\{v(S):c(S)\le B, S\subseteq U_2\}$, so $v(O_2^*)=V_2^*$. 
	So we have 
        \[
        v(S^*)\ge v(S^*)- \frac{V_1}{2B}\cdot c(S^*) \ge v(O^*_2) - \frac{V_1}{2B}\cdot
        c(O^*_2) 
        \ge V_2^* - \frac{V_1}{2} \ge \frac{V_2^*}{2}\geq \frac{\OPT}{4}.
        \]
        If $c(S^*)\le \frac{B}{16 \log \log n}$, then by \cref{thm:good_dist}, it follows
        that $\Ex_{d\sim \mathcal{D}[S^*]}[v(R)] \ge \frac{v(S^*)}{16 \log \log n} \ge\frac{\OPT}{64 \log \log n}$.
        So if $c(S^*)\leq\frac{B}{16\log\log n}$, then the expected value obtained is at
        least 
        \[
        \Pr[\Gm]\cdot 0.8\al\cdot\frac{\OPT}{64\log\log n}\geq\frac{\OPT}{(2880+1280\e)\log\log n}.
        \]

	Now suppose $c(S^*)>\frac{B}{16 \log \log n}$. 
	In this case, we argue that $R'$ has good value.
	By \cref{lem:dem_bounds}, the vector $\frac{V_1}{2B}c$ is feasible for
        \eqref{xoslp}.
	Hence the optimal solution $q$ to \eqref{xoslp} satisfies 
        $q(S^*)\ge  \frac{V_1}{2B}\cdot c(S^*)$.
	We have 
	\begin{align*}
	  q(A) = q(S^*)-q(S^*\setminus  A) &
          \ge \frac{V_1}{2B}\cdot c(S^*)-\frac{V_1}{4 B}\cdot c(S^*\setminus A) 
          \ge \frac{V_1}{4B}\cdot c(S^*) > \frac{V_1}{64 \log \log n}
	\end{align*}
	Thus, $v(A)\ge v(S^*)-v(S^*\sm A)\geq q(A)\ge \frac{V_1}{64 \log \log n}$. 
	This implies that 
        \begin{equation*}
          \begin{split}
            v(R') & \ge \min\Bigl\{\tfrac{V_1}{4}-v(e^*),\,\tfrac{V_1}{64\log\log n}\Bigr\}
            \geq \min\Bigl\{\tfrac{\al}{\beta}\cdot\bigl(\OPT-v(e^*)\bigr)-v(e^*),\,
            \tfrac{\al}{\beta}\cdot\tfrac{\OPT-v(e^*)}{64\log\log n}\Bigr\} \\
            & = \frac{\al}{\beta}\cdot\min\Bigl\{\OPT-\tfrac{v(e^*)}{\al},\,\tfrac{\OPT-v(e^*)}{64\log\log n}\Bigr\}
          \end{split}
        \end{equation*}
        by the construction of $R'$.
        So if $c(S^*)>\frac{B}{16\log\log n}$, the expected value returned is at least
        \begin{equation*}
          \begin{split}
            0.2\cdot v(e^*)+\frac{1}{4}\cdot 0.8\beta\cdot\tfrac{\al}{\beta}\cdot\min
            \Bigl\{\OPT-& \tfrac{v(e^*)}{\al},\,\tfrac{\OPT-v(e^*)}{64\log\log n}\Bigr\} 
            \geq \min\Bigl\{0.2\al\cdot\OPT,\, 0.2\al\cdot\tfrac{\OPT}{64\log\log n}\Bigr\}
            \\ & = 0.2\al\cdot\frac{\OPT}{64\log\log n}=\frac{\OPT}{(2880+1280\e)\log\log n}. 
          \end{split}
        \end{equation*}
        So in all cases, the expected value returned is $\OPT/O(\log\log n)$.
\end{proof}

Finally, we prove truthfulness and budget feasibility.

\begin{lemma} \label{budgetfeas}
\cref{alg:mechanism} can be combined with suitable payments to obtain a mechanism that
is truthful, individually rational, has no positive transfers, and is budget feasible,
with probability $1$. 
\end{lemma}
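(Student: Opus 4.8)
The plan is to use Myerson's characterization (Theorem~\ref{lem:myersons}): it suffices to exhibit payments making the mechanism truthful, individually rational, with no positive transfers, and budget feasible with probability~$1$. Since Algorithm~\ref{bfmech-alg} is randomized, I would show it is a \emph{distribution over deterministic algorithms} (a convex combination of the ``return~$R$'', ``return~$R'$'', and ``return~$e^*$'' branches, together with the random partition into $U_1,U_2$ and the random draw $d\sim\mathcal D[S^*]$), and argue truthfulness, individual rationality, no positive transfers, and budget feasibility for each deterministic algorithm in the support, conditioned on all the random choices. This reduces the task to analyzing three deterministic selection rules.

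First I would establish \emph{monotonicity} of each of the three rules as a function of the reported cost vector~$c$, with all randomness fixed. Note that $V_1$ depends only on costs of sellers in $U_1$, and these sellers are discarded, so for any seller $e\in U_2$ the quantities $V_1$, the prices $\tfrac{V_1}{2B}$, the threshold vector $d$ (drawn from $\mathcal D[S^*]$, which depends on $S^*$), and the LP solution $q$ depend on $c$ only through $S^*=\argmax_{S\subseteq U_2}\{v(S)-\tfrac{V_1}{2B}c(S)\}$. The key subtlety is that $S^*$ itself depends on the reported costs of sellers in $U_2$; here I would invoke the standard fact (used in \cite{NeogiPS24,Bei2017WorstCaseMD}) that, with a consistent tie-breaking rule, the demand set $S^*$ is itself ``monotone'' in the sense that lowering $c_e$ for a seller $e$ already in $S^*$ keeps $e$ in $S^*$ and does not change $S^*$ otherwise — more precisely, one shows that whether $e\in S^*$ is a monotone (threshold) function of $c_e$ with $c_{-e}$ fixed, and that on the side of the threshold where $e\in S^*$, the set $S^*$ (hence $d$, $q$, $R$, $R'$, $A$) is independent of $c_e$. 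Given that, for the ``$R$'' rule, $e\in R$ iff $e\in S^*$ and $c_e\le d_e$, both conditions being monotone threshold conditions in $c_e$ whose relevant parameters are fixed on the winning side, so $R$ is monotone with threshold $\tau_e=\min\{\sigma_e,\,d_e\}$ where $\sigma_e=\sup\{c_e: e\in S^*\}$; similarly for the ``$A$'' rule with $d_e$ replaced by $q_e\cdot\tfrac{4B}{V_1}$. For the ``$R'$'' rule I additionally need that $R'$, being the maximal prefix of $A$ (in the fixed element order) with $q(R')\le\tfrac14 V_1$, is monotone: since the order and $q$ and $V_1$ are fixed once $S^*$ is fixed, and lowering $c_e$ only adds elements to $A$, a prefix element of $A$ that was in $R'$ stays in the (weakly larger) $A$ and, being a prefix, stays in $R'$ — here one must be slightly careful that adding \emph{earlier} elements to $A$ cannot push $e$ out of the prefix, which holds because the prefix is defined by scanning $A$ in order and the $q$-budget $\tfrac14 V_1$ is fixed; I would spell this out. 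The ``$e^*$'' rule is trivially monotone since $e^*=\argmax_{e\in U}v(e)$ does not depend on $c$ at all (with a fixed tie-break).

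Next, given monotonicity, Theorem~\ref{lem:myersons} supplies threshold payments $p_e=\tau_e(c_{-e})$ for winners and $0$ otherwise, giving truthfulness, individual rationality, and no positive transfers automatically, \emph{provided the thresholds $\tau_e$ are finite}; finiteness follows from the preprocessing assumption $c_e\le B$ together with $\tau_e\le\min\{\sigma_e,d_e\}\le B$ (as $d\in\Delta^{S^*}(B)$) for the ``$R$'' rule, $\tau_e\le q_e\tfrac{4B}{V_1}$ for the ``$A$''/``$R'$'' rule (finite since $V_1>0$ whenever there is anything to buy — and if $V_1=0$ the mechanism buys nothing), and $\tau_{e^*}=B$ for the ``$e^*$'' rule. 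Finally I would verify \textbf{budget feasibility}: for the ``$R$'' rule, each winner $e\in R$ is paid $\tau_e\le d_e$, so total payment is $\le d(R)\le d(S^*)\le B$ since $d\in\Delta^{S^*}(B)$; for the ``$R'$'' rule, each winner $e\in R'$ is paid $\tau_e\le q_e\cdot\tfrac{4B}{V_1}$, so the total is $\le\tfrac{4B}{V_1}\,q(R')\le\tfrac{4B}{V_1}\cdot\tfrac14 V_1=B$ by the defining bound $q(R')\le\tfrac14 V_1$; for the ``$e^*$'' rule the single payment is $\tau_{e^*}\le B$. Hence every deterministic algorithm in the support, with its Myerson payments, is budget feasible with probability~$1$, and a universally truthful, individually rational, budget-feasible randomized mechanism results.

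The main obstacle I anticipate is the monotonicity argument for $R'$, and more fundamentally the claim that $S^*$ behaves monotonically in each seller's reported cost — i.e., that the demand-set computation with consistent tie-breaking yields, for each $e$, a threshold $\sigma_e$ such that $e\in S^*$ exactly when $c_e\le\sigma_e$, and that $S^*$ (and hence $d,q,A$) is independent of $c_e$ on the event $\{e\in S^*\}$. This is the crux: it is what lets the threshold prices be well-defined and finite and what makes the prefix rule for $R'$ monotone. I would prove it by a standard exchange/tie-breaking argument (as in \cite{NeogiPS24}), noting that $v(S)-\tfrac{V_1}{2B}c(S)$ is affine and decreasing in $c_e$ with slope $-\tfrac{V_1}{2B}$ on sets containing $e$ and constant in $c_e$ on sets not containing $e$, so the maximizer switches from an ``$e$-in'' set to an ``$e$-out'' set exactly once as $c_e$ increases, and the consistent tie-break picks a canonical one throughout; everything else ($d$ via $\mathcal D[S^*]$, $q$, $A$, $R'$) is a deterministic function of $S^*$ and of $c_{-e}$ only.
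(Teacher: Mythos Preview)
Your approach is essentially the paper's: fix all random bits, verify monotonicity of each of the three branches via the key ``no-bossiness'' fact that $S^*$ (and hence $d$, $q$, $A$, $R'$) is unchanged when a winner lowers her cost, then bound each winner's threshold payment by $d_e$, by $q_e\cdot\tfrac{4B}{V_1}$, and by $B$ respectively to get budget feasibility. The exchange argument you sketch in the last paragraph for $S^*(c)=S^*(c')$ is exactly what the paper does.

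One point needs correction. In your $R'$ monotonicity argument you say ``lowering $c_e$ only adds elements to $A$'' and then worry whether ``adding earlier elements to $A$ cannot push $e$ out of the prefix,'' justifying this by ``the $q$-budget $\tfrac14 V_1$ is fixed.'' That justification is wrong: if earlier elements \emph{were} added to $A$, they would consume $q$-budget and could indeed push $e$ out of the maximal prefix $R'$. The correct (and simpler) observation, which the paper uses, is that $A$ is \emph{unchanged}, not merely weakly larger: once $S^*$, $q$, and $V_1$ are fixed, membership of each $f\in S^*$ in $A$ depends only on $c_f$, so lowering $c_e$ for $e$ already in $A$ leaves every other element's membership untouched and keeps $e$ in. Hence $A$ is identical under $c$ and $c'$, and since $R'$ is a deterministic function of $A$, $q$, and $V_1$ (and not of $c$ directly), $R'$ is identical as well. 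With this fix your argument goes through and matches the paper's.
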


\begin{proof}
We argue that Algorithm~\ref{alg:mechanism} is monotone under all realizations of its
random bits, which immediately implies by \cref{lem:myersons} that using threshold prices
as payments yields truthfulness, individual rationality, and no positive transfers, with
probability $1$. 

\smallskip
\noindent
\textbf{Truthfulness, individual rationality, and no positive transfers.}
	Fix a realization of the random bits, so that the partition $U_1,U_2$ of $U$ is
        fixed, and the set returned (either $R,R'$ or $e^*$) is fixed. 
	Fixing the random bits also fixes the vector $d$ sampled from $\mathcal{D}[S]$
        for any given set $S\sse U$.
	We need to show that if $e \in U$ is a winner, and $e$ decreases their cost from
        $c_e$ to $c'_e < c_e$, then $e$ remains a winner. 

	First, suppose that the set returned is $e^* = \argmax_{e \in U}v(e)$.
	Note that $e^*$ does not depend on the cost of any element in $U$, thus $e^*$
        decreasing their cost does not affect whether $e^*$ remains a winner or not. 

	Suppose that the set returned is $R$.
	Let $e \in R$ with cost $c_e$, and suppose that $e$ decreases their reported cost to $c'_e<c_e$.
	We claim that the set $S^*$ computed in step~\ref{demdset} is the same for both
        the inputs $c:=(c_e,c_{-e})$ and $c':=(c'_e,c_{-e})$.
        Let $\Sc(c)$ denote the set of optimal solutions to
        $\max\,\{v(S)-\frac{V_1}{2B}\cdot c(S): S\sse U_2\}$ and $\Sc(c')$ denote the set
        of optimal solutions to $\max\,\{v(S)-\frac{V_1}{2B}\cdot c'(S): S\sse U_2\}$
        (note that $U_2$ is the same for both $c$ and $c'$ since the random bits are fixed).
        Let $S^*(c)\in\Sc(c)$ and $S^*(c')\in\Sc(c')$ denote the sets
        computed in step~\ref{demdset} for the inputs $c$ and $c'$ respectively.
	We argue that $S^*(c)\in\Sc(c')\sse\Sc(c)$.
	Since the demand oracle uses a consistent tie-breaking rule, this implies that it
        must return the set $S^*(c)$ also under the input $c'$, i.e., $S^*(c)=S^*(c')$.

	For a set $H\subseteq U$ and any $\tc\in \mathbb{R}^U_+$, define 
        $\demd(H,\tc) = v(H)-\frac{V_1}{2B}\cdot\tc(H)$.
        Consider any $H\in\Sc(c')$. 
        We have $\demd(S^*(c),c)\geq\demd(H,c)$ and
        $\demd(H,c')\geq\demd(S^*(c),c')$, by definition. Adding these inequalities and simplifying, we
        obtain that $c_{e'}-c_e\geq c(H)-c'(H)$. This can only happen if $e\in H$, which
        implies that this inequality is in fact an equality, and hence the two
        inequalities that were added to yield this must also be equalities.
        So we have $\demd(S^*(c),c)=\demd(H,c)$ and $\demd(H,c')=\demd(S^*(c),c')$. The
        former implies that $H\in\Sc(c)$, and so $\Sc(c')\sse\Sc(c)$, and the latter
        implies that $S^*(c)\in\Sc(c')$.
	Thus, $S^*(c) = S^*(c')$.

	From this it follows that the distributions $\mathcal{D}[S^*(c)]$ and
        $\mathcal{D}[S^*(c')]$ are the same. 
	Since the random bits used by the mechanism are fixed, this implies that the
        vector $d$ sampled from $\mathcal{D}[S^*]$ is the same under both $c$ and
        $c'$. 
	It follows that $R$ is the same set under both the inputs $c$ and $c'$.
	Hence, since $e \in R$ under input $c$, we also have $e \in R$
        under input $c'$.

	Now, suppose that the set returned is $R'$. Let $e \in R'$ and suppose that $e$
        decreases their reported cost to $c'_e<c_e$. 
	As before, let $c = (c_e,c_{-e})$ and $c'=(c'_e,c_{-e})$.
	We claim that the sets $A$ and $R'$ are the same under both inputs $c$ and $c'$.
	As argued above, the set $S^*$ is the same under both inputs $c$ and $c'$. 
	This implies that the vector $q$ computed in step~\ref{xosthresh} is also the same
        under $c$ and $c'$ (since the computation of $q$ is fixed once $S^*$ and $v$ are fixed). 
        This then also implies that $A$ is the same under both inputs: since $e\in R'\sse A$
        under input $c$, we have $c'_e<c_e \le q_e \cdot \frac{4B}{V_1}$; for any other
        element $f\neq e$ belonging to $A$ under input $c$, we have $c'_f=c_f \le q_f
        \cdot \frac{4B}{V_1}$.  
	Finally, since the computation of $R'$ does not consider the costs, it follows
        that $R'$ is also the same under both reported costs $c$ and $c'$. 
        In particular, since $e \in R'$ under input $c$, we also have $e \in R'$ under input $c'$.

	Hence we have shown that, regardless of the realizations of the random bits, we
        have a monotone algorithm. 

        \medskip\noindent
	\textbf{Budget feasibility.}
	Now we show that the threshold payments yield budget feasibility, regardless of
        the realizations of the random bits used by the algorithm.
	Again, fix a realization of the random bits.
	Let $\tau _e$ be the threshold price for $e \in U$ as defined in \cref{lem:myersons}.
	If the mechanism returns $e^*$, then $\tau _{e^*}=B$ and $\tau _e = 0$ for all
        $e\neq e^*$, so we obtain budget feasibility. 

	Otherwise, the mechanism returns either $R$ or $R'$. Note that we have argued
        above that if a winner $e$ changes her reported cost and remains a winner, then
        the set returned, as also the ``intermediate'' sets $S^*$, $A$, also do not
        change. This property, which is called no-bossiness in~\cite{NeogiPS24}, makes it
        quite convenient to reason about threshold payments. 

	Suppose that the mechanism returns $R$. 
	Fix $e \in U$ that is a winner under reported costs $c$. Let $S^*=S^*(c)$ be the
        set computed in step~\ref{demdset}, and $d$ be the vector obtained by sampling
        from $\mathcal{D}[S^*]$ using the fixed random bits. Since $S^*$ and $d$ do not
        change if $e$ remains a winner, it is easy to see that, by design, step~\ref{rset}
        ensures that $\tau_e\leq d_e$. It follows that the total
        payment is at most $d(S^*)\leq B$, since $d\in\Dt^{S^*}(B)$.  

	Now, suppose that the mechanism returns $R'$, and consider a winner $e\in R'$
        under input $c$. Since the vector $q$ and the sets $A,R$ computed in
        steps~\ref{xosthresh}--\ref{rpset} do not change if $e$ remains a winner,
        by design, step~\ref{threshprune} ensures that 
        $\tau_e\leq q_e\cdot\frac{4B}{V_1}$. So since $q(R')\leq\frac{V_1}{4}$, the total
        payment is at most $B$.

        \medskip
        Note that the above discussion also shows that the threshold for a player $e$ can
        be computed in polytime. The threshold value $\tau'_e$ for $e$ to belong to the
        set $S^*$ can be obtained using a demand oracle. If the set output is $R$, then
        the threshold for $e$ is $\min\{\tau'_e,d_e\}$. If the set output is $R'$, then
        the threshold for $e$ to lie in $A$ is 
        $\tau^A_e=\min\bigl\{\tau'_e,q_e\cdot\frac{4B}{V_1}\bigr\}$. Note that fixing $A$ also
        fixes the set $R'$. Let $\bA$ be the $A$-set obtained under input $(c'_e,c_{-e})$
        when $e$ belongs to it, i.e., when  $c'_e<\tau^A_e$; let $\bR$ be the
        corresponding $R'$-set. Note $\bA$ and $\bR$ depend only on $c_{-e}$. One can
        infer that $\tau_e=\tau^A_e$, if $e\in\bR$ and is $0$ otherwise.
\end{proof}

\bibliographystyle{alpha}
\bibliography{improvbudgetfeas-refs}

\appendix

\section{Proofs omitted from the main body}

\begin{proof}[Proof of \cref{thm:existence_of_dist}]
	Consider the following LP
	\begin{alignat*}{2}
	  \text{maximize} & \quad & t \qquad \qquad & \tag{P'} \label{primal} \\
          \text{subject to} && \sum_{d \in K} \frac{v(\{e \in U:c_e\le d_e\} )}{v(U)}\cdot
          x_d & \ge t \qquad \forall c \in K \\
	  && \sum_{d \in K}x_d & = 1 \\
	  && x & \ge 0.
        \end{alignat*}
	Let $M \in \mathbb{R}^{K\times K}$ be a matrix with rows and columns indexed by
        elements in $K$ defined as $M_{c,d} = \frac{v(\{e \in U:c_e\le d_e\} )}{v(U)}$ for
        every $c,d \in K$. 
	Let $\vec{1}$ be the all-ones vector and let $J$ be the all-ones matrix of
        dimension $|K| \times |K|$. 
	Then \eqref{primal} can be written succinctly as
        \begin{equation*}
	  \text{maximize} \quad t \qquad \text{subject to} \qquad  Mx \ge  t\vec{1}, \quad
	  \vec{1}^\intercal x = 1, \quad x \ge 0.
	\end{equation*}	
        The dual LP is
        \begin{equation}
	  \text{minimize} \quad \rho \qquad \text{subject to} \qquad  M^\intercal y \le  \rho\vec{1}, \quad
	  \vec{1}^\intercal y = 1, \quad y \ge 0. \tag{D'} \label{dual}
	\end{equation}	
	Let $(y,\rho )$ be a feasible solution to \eqref{dual}. 
	By subadditivity of $v$, for every $c,d \in K$, we have that 
        $v(\{e \in U:c_e\le d_e\} )+v(\{e \in U:d_e\le c_e\} ) \ge v(U)$. 
	Hence, we get that 
        $M_{c,d}+M_{d,c} = \frac{v(\{e \in U:c_e\le d_e\} )+v(\{e \in U:d_e\le c_e\} )}{v(U)} \ge 1$ for every $c,d \in K$. 
	In other words, $M+M^\intercal \ge J$, that is, every entry of $M+M^\intercal $ is
        at least the corresponding entry of $J$. 
	Now, 
	\begin{align*}
		\rho = \rho (y^\intercal \vec{1}) = y^\intercal (\rho \vec{1}) \ge y^\intercal My = \frac{1}{2}y^\intercal (M+M)y = \frac{1}{2}y^\intercal (M+M^\intercal )y \ge \frac{1}{2}y^\intercal J y = \frac{1}{2},
	\end{align*}
	where we used the fact that $\vec{1}^\intercal y=1$ and the fact that $y^\intercal My = y^\intercal M^\intercal y$.
	Hence we conclude that $\rho \ge \frac{1}{2}$ for every feasible solution of \eqref{dual}.
	In particular, this implies that the optimal value of \eqref{dual}, and hence
        \eqref{primal}, is at least $\frac{1}{2}$. 

	Now let $(x,t)$ be an optimal solution for \eqref{primal}.
	Consider the distribution $\mathcal{D}^*$ over vectors in $K$, where the
        probability of sampling a vector $d \in K$ is $x_d$. 
	Then, for all $c \in K$, we have $\Ex_{d\sim \mathcal{D}^*}[v(\{e \in U:c_e\le
          d_e\} )] = \sum_{d \in K} v(\{e \in U:c_e\le d_e\} ) x_d \ge t\cdot v(U) \ge \frac{1}{2}v(U)$.
	This proves the lemma.
\end{proof}

\begin{lemma}\label{lem:no_pure_strategy}
	For every $d \in \mathbb{R}^U_+$ with $d(U)\le B$, there exists $c \in \mathbb{R}^U_+$ with $c(U)\le B$ such that $v(\{e \in U:c_e\le d_e\} ) \le v(e^*)$.
\end{lemma}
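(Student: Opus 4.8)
The plan is to let the $c$-player outbid the $d$-player on every item but one. Intuitively, to ``beat'' an item $e$ (i.e., to have $e\notin\{e'\in U:c_{e'}\le d_{e'}\}$) the $c$-player only needs $c_e$ just barely above $d_e$, so outbidding a set $T$ of items costs only marginally more than $d(T)$; since $d(U)\le B$, this fits within the budget for \emph{any} $T$, with a sliver of slack left over to make the strict inequalities go through. So I would keep only a single element $f$, making $\{e\in U:c_e\le d_e\}=\{f\}$, whose value $v(f)$ is at most $v(e^*)$ by the definition of $e^*$.

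Concretely, pick $f\in U$ with $d_f$ as large as possible, set $c_f:=0$, and for each $e\in U\setminus\{f\}$ set $c_e:=d_e+\eta$ for a uniform, sufficiently small $\eta>0$ (when $|U|=1$ there is nothing to outbid and $c\equiv 0$ works). Then $c_f=0\le d_f$ keeps $f$ in the surviving set, while $c_e>d_e$ drops every other $e$, so $v(\{e\in U:c_e\le d_e\})=v(f)\le v(e^*)$. For the budget, $c(U)=\bigl(d(U)-d_f\bigr)+(|U|-1)\eta$, which is at most $B$ provided $(|U|-1)\eta\le B-d(U)+d_f$; since $d(U)\le B$ and $d_f\ge 0$ the right-hand side is nonnegative, so a small enough $\eta>0$ exists whenever it is strictly positive.

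The only point needing care --- the ``obstacle'', such as it is --- is securing room for a strictly positive $\eta$, i.e., $B-d(U)+d_f>0$. Since $f$ maximizes $d_f$, this can fail only when $d\equiv 0$ and $B=d(U)=0$, a degenerate instance (with $\OPT=0$) that we may exclude by assuming $B>0$. Note that no property of $v$ beyond $v(f)\le v(e^*)$ is used, so subadditivity and monotonicity play no role here; the whole argument is a short, routine verification once the ``beat everything but one element'' idea is in place.
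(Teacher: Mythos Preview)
Your proof is correct and follows essentially the same idea as the paper's: outbid every element except one, so that the surviving set is a singleton whose value is at most $v(e^*)$. The only cosmetic difference is in the budget bookkeeping---the paper lowers $c_{e'}$ to exactly $d_{e'}-(n-1)\epsilon$ so that $c(U)=d(U)$, whereas you set $c_f=0$ and use the slack $d_f$ (plus any slack $B-d(U)$) to fund the $\eta$-increments---and your version actually handles the $d\equiv 0$, $B>0$ case that the paper's choice of $e'\in Q$ silently skips.
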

\begin{proof}
	Let $Q = \{e \in U:d_e>0\}$. 
	Let $e'$ be an arbitrary element in $Q$.
	Let $\epsilon = \frac{1}{n}\min_{e \in Q}\{d_e\} > 0$.
	Construct the vector $c$ by setting $c_e = d_e+\epsilon $ for all $e \in U-e'$, and set $c_{e'} = d_{e'} - (n-1) \epsilon $.
	Then note that $c(U)=d(U)\le B$. Moreover, $c_e \ge 0$ for all $e \in U$ by our choice of $\epsilon $.
	Now, $v(\{e \in U:c_e\le d_e\} ) = v(e') \le  \max_{e \in U}\{v(e)\} = v(e^*)$.
\end{proof}

\end{document}